\theoremstyle{theorem}%
\newtheorem{theorem}{Theorem}[section]
\newtheorem{lemma}[theorem]{Lemma}%
\newtheorem{observation}[theorem]{Observation} %
   \theoremstyle{nonumberplain}
   \newtheorem{proof}{Proof:}
   \newcommand{\myqedsymbol}{\rule{2mm}{2mm}}
\newcommand{\mmx}[1] {\ifmmode{#1}\else{\mbox{\(#1\)}}\fi}
\newcommand{\etal}{et al.\xspace}
\newcommand{\Rspace} {\mmx{{\mathbb R}}}  
\newcommand{\dist} {\mmx{\rm d}}  
\newcommand{\Frechet} {Fr\'{e}chet\xspace}%
\newcommand{\FDist} {\mmx {\rm \delta_{F}}}%
\newcommand{\DFDist} {\mmx {\rm \delta_{D}}}
\newcommand{\permut}[1]{\left\langle {#1} \right\rangle}%
\newcommand{\HLinkShort}[2]{\hyperref[#2]{#1\ref*{#2}}}
\newcommand{\HLink}[2]{\hyperref[#2]{#1~\ref*{#2}}}
\newcommand{\HLinkPage}[2]{\hyperref[#2]{#1~\ref*{#2}%
      $_\text{p\pageref{#2}}$}}
\newcommand{\HLinkPageOnly}[1]{\hyperref[#1]{Page~\refpage*{#1}%
      $_\text{p\pageref{#1}}$}}
\newcommand{\HLinkSuffix}[3]{\hyperref[#2]{#1\ref*{#2}{#3}}}
\newcommand{\HLinkPageSuffix}[3]{\hyperref[#2]{#1\ref*{#2}%
      #3$_\text{p\pageref{#2}}$}}
\newcommand{\figlab}[1]{\label{fig:#1}}
\newcommand{\figref}[1]{\HLink{Figure}{fig:#1}}
\newcommand{\seclab}[1]{\label{sec:#1}}
\newcommand{\secref}[1]{\HLink{Section}{sec:#1}}
\newcommand{\apndlab}[1]{\label{apnd:#1}}
\newcommand{\apndref}[1]{\HLink{Appendix}{apnd:#1}}
\newcommand{\lemlab}[1]{\label{lemma:#1}}
\newcommand{\lemref}[1]{\HLink{Lemma}{lemma:#1}}%
\newcommand{\obslab}[1]{\label{observation:#1}}
\newcommand{\obsref}[1]{\HLink{Observation}{observation:#1}}
\newcommand{\thmlab}[1]{{\label{theo:#1}}}
\newcommand{\thmref}[1]{\HLink{Theorem}{theo:#1}}
\providecommand{\eqlab}[1]{}%
\renewcommand{\eqlab}[1]{\label{equation:#1}}
\newcommand{\backbone}{backbone\xspace} \newcommand{\Pone} {{\sf P1}}
\newcommand{\Ptwo} {{\sf P2}} \newcommand{\ball} {B}
\newcommand{\error} {\delta} \newcommand{\Amap} {D}
 \newcommand{\Amapsimp}{\EuScript{D}}
\newcommand{\newp}{\hat{p}} \newcommand{\newq} {\hat{q}}
\newcommand{\pnew} {\hat{p}} \newcommand{\qnew} {\hat{q}}
\newcommand{\algD} {{\sf ApprDecision}}  
\newcommand{\apprFbackbone} {{\sf ApprFBackbone}}
\newcommand{\smallexactalg} {{\sf ExactFSmall}} \newcommand{\yes}
{\mmx {\rm yes}} \newcommand{\no} {\mmx {\rm no}}
\newcommand{\matching}{correspondence\xspace} \newcommand{\length}{l}
\newcommand{\rank} {\ensuremath{\text{\textsc{rank}}_d}}
\newcommand{\bwnumber}{number of \switchingcells}   
 \newcommand{\bwcells} {\S}
\newcommand{\switchingcells} {switching cells}
\newcommand{\Switchingcells} {Switching cells}
\newcommand{\switchingcell} {switching cell}
\DeclareMathOperator{\polylog}{polylog} \newcommand{\column} {C}
\newcommand{\reach} {V}
\newcommand{\A}{\mathcal{A}}%
\newcommand{\remove}[1]{}
\newcommand{\mergeCol} {{\sf mergeColumn}} \newcommand{\newreach} {X}
\newcommand{\potentialReach} {\mmx {\rm potentialReachFlag}}
\newcommand{\wantReach} {\mmx {\rm potentialStartFlag}}
\newcommand{\EC}{\EuScript{C}}
\newcommand{\eps}{{\varepsilon}}
\newcommand{\curve}{\mathbf{\tau}} \newcommand{\curveA}{\mathbf{\pi}}
\newcommand{\curveB}{\mathbf{\sigma}}
\newcommand{\Asimp}{\widetilde{\curveA}}
\newcommand{\Bsimp}{\widetilde{\curveB}}
\renewcommand{\Re}{\mathbb{R}}%
\newcommand{\reals}{\Re}%
\newcommand{\Interval}{\EuScript{I}}
\newcommand{\TFuzzDecProc}[2]{T_{\textsc{FDecision}}(#1, #2)}
\renewcommand{\th}{th\xspace}
\providecommand{\si}[1]{#1}
\newcommand{\SarielComp}[1]{}
\newcommand{\NotSarielComp}[1]{#1} \else
\newcommand{\SarielComp}[1]{#1}%
\newcommand{\NotSarielComp}[1]{}%
\newcommand{\email}[1]{\href{mailto:#1}{#1}}%
\begin{document}
\title{\Frechet{} Distance for Curves, Revisited%
   \footnote{A preliminary version of this paper appeared in ESA 2006
      \cite{ahkww-fdcr-06}.}%
}

\author{%
   Boris Aronov%
   \thanks{%
      Dept. of Comp. Sci. \& Engineering; %
      Polytechnic School of Engineering; %
      New York University, NY; %
      Research supported in part by NSF ITR Grant CCR-00-81964 and by
      a grant from US-Israel Binational Science Foundation. %
      \url{http://cis.poly.edu/\string~aronov}.%
   }%
   \and%
   Sariel Har-Peled%
   \thanks{Dept. of Comp. Sci, University of Illinois; 1304 West
      Springfield Ave., Urbana, IL 61801;
      \email{sariel@uiuc.edu}.
      \url{http://sarielhp.org}.%
   }%
   \and%
   Christian Knauer%
   \thanks{%
      Universit\"at Bayreuth; %
      Institut f\"ur Angewandte Informatik; %
      95440 Bayreuth, Germany; %
      \email{christian.knauer@uni-bayreuth.de}.  }%
   \and%
   Yusu Wang\thanks{%
      Dept. of Comp. Sci. and Engineering, The Ohio State Univ,
      Columbus, OH 43016; \email{yusu@cse.ohio-state.edu}.
      \url{http://www.cse.ohio-state.edu/\string~yusu/}.%
   }%
   \and%
   Carola Wenk%
   \thanks{%
      Department of Computer Science, %
      Tulane University, %
      New Orleans, LA 70118, %
      \email{cwenk@tulane.edu}, %
      \url{http://www.cs.tulane.edu/\string~carola}.%
   }%
}


\date{\today}

\maketitle
\begin{abstract}
    We revisit the problem of computing \Frechet{} distance between
    polygonal curves under $L_1$, $L_2$, and $L_\infty$ norms,
    focusing on discrete \Frechet{} distance, where only distance
    between vertices is considered.  We develop efficient algorithms
    for two natural classes of curves. In particular, given two
    polygonal curves of $n$ vertices each, a $\eps$-approximation of
    their discrete \Frechet{} distance can be computed in roughly
    $O(n\kappa^3\log n/\eps^3)$ time in three dimensions, if one of
    the curves is \emph{$\kappa$-bounded}. Previously, only a
    $\kappa$-approximation algorithm was known. If both curves are the
    so-called \emph{\backbone~curves}, which are widely used to model
    protein backbones in molecular biology, we can $\eps$-approximate
    their \Frechet{} distance in near linear time in two dimensions,
    and in roughly $O(n^{4/3}\log nm)$ time in three dimensions.  In
    the second part, we propose a pseudo--output-sensitive algorithm
    for computing \Frechet{} distance exactly. The complexity of the
    algorithm is a function of a quantity we call the
    \emph{\bwnumber{}}, which is quadratic in the worst case, but
    tends to be much smaller in practice.
\end{abstract}

\section{Introduction}
\seclab{intro}

\Frechet{} metric is a natural measure of similarity between two
curves \cite{eghmm-nsmpa-02}.  An intuitive definition of the \Frechet
distance is to imagine that a dog and its handler are walking on their
respective curves.  Both can control their speed but can only go
forward.  The \Frechet{} distance of these two curves is the minimal
length of any leash necessary for the handler and the dog to move from
the starting points of the two curves to their respective endpoints.
\Frechet{} distance and its variants have been widely used in many
applications such as in dynamic time-warping \cite{kp-sudtw-99},
speech recognition \cite{khmtc-pgbhp-98}, signature verification
\cite{pp-carcd-90}, and matching of time series in databases
\cite{kks-osmut-05}.

Alt \etal\cite{ag-cfdbt-95} present an algorithm to compute the
\Frechet distance between two polygonal curves of $n$ and $m$
vertices, respectively, in time $O(nm\log^2 (nm))$.  Improving this
roughly quadratic-time solution for general curves seems to be hard,
and so far, no algorithm, exact or approximate, with running time
significantly smaller than $O(nm)$ has been found for this problem for
general curves.%
Since the \Frechet{} distance essentially requires computing a
correspondence between the two curves, it has some resemblance to the
edit distance problem (which asks for the best alignment of two
strings), for which no substantially subquadratic algorithm is known
either.

On the other hand, another similarity measure, the \emph{Hausdorff
   distance},
can be computed faster in the plane and approximated efficiently in
higher dimensions.  Unfortunately, Hausdorff distance does not reflect
curve similarity well (see \figref{fig:kbounded} (a) for an example).
Alt \etal \cite{akw-cdmpc-04} showed that the Hausdorff distance and the
\Frechet{} distance are the same for a pair of closed convex curves.
They also showed that the two measures are closely related for
$\kappa$-bounded curves. Roughly speaking, for any two points $p, q$
on a $\kappa$-bounded curve $\curve$, $\curve(p,q)$, the subcurve from
$p$ to $q$, is contained within some neighborhood of $p$ and $q$ with
size roughly $\kappa || p-q ||$ (see \figref{fig:kbounded} (b);
precise definition is introduced later).  Alt~et~al.\ showed that the
\Frechet{} distance between any two $\kappa$-bounded curves is bounded
by $\kappa+1$ times the Hausdorff distance between them. This leads to
a $\kappa$-approximation algorithm for the \Frechet{} distance for any
pair of $\kappa$-bounded curves, and they also developed an algorithm
to compute the reparametrizations of input curves that realize this
approximation (see the definitions below).  The algorithm runs is
$O((n+m)\log^2 (n+m)2^{\alpha(n+m)})$ time in two dimensions.  In
three or higher dimensions, the time complexity is dominated by the
computation of Hausdorff distance between the curves.  Not much is
known about the \Frechet{} distance for other types of curves. In
fact, even for $x$-monotone curves in three dimensions, no known
algorithm runs in substantially subquadratic time.

The problem of minimizing \Frechet{} distance under various classes of
transformations has also been studied \cite{akw-mpcrf-01, w-smhd-02}.
However, even in two dimensions, the exact algorithm takes roughly
$O(n^6)$ time for computing the best \Frechet{} distance under
translations, and roughly $O(n^8)$ time under rigid motions.
Approximation algorithms have been studied \cite{cm-ampcr-05,
   w-smhd-02}, but practical solutions remain elusive. The basic
building block of those algorithms, as well as one of the bottlenecks,
is the computation (or approximation) of the \Frechet{} distance
between curves $\curveA$ and $\curveB$.

There is a slightly simpler version of the \Frechet~distance, the
\emph{discrete \Frechet~distance}, which only consider vertices of
polygonal curves. Its computation takes $\Theta(n^2)$ time and space
using dynamic programming \cite{em-cdfd-94}, and no substantially
subquadratic algorithm is known either.  \Frechet{} distance has also
been extended to graphs (maps) \cite{aerw-mpm-03}, to piecewise smooth
curves \cite{r-cfdps-05}, to simple polygons \cite{bbw-cfdsp-06} and
to surfaces \cite{ab-scfds-05}.  Finally, \Frechet{} distance was used
as the similarity measure for morphing \cite{eghm-mbp-01} between
curves, and for high-dimensional approximate nearest neighbor search
\cite{i-annaf-02}. It was also used for efficient curve simplification
\cite{ahmw-nltaa-05}.
\begin{figure}[htb]
    \begin{center}
        \begin{tabular}{ccccc}
          \includegraphics[width=4cm]{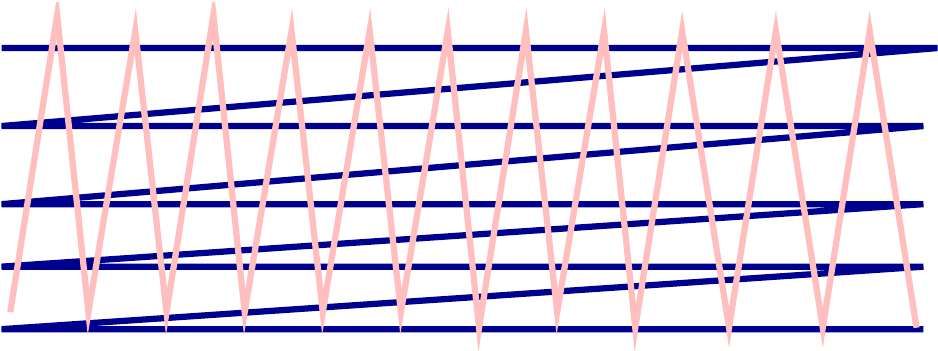}%
          &%
            \hspace*{0.1in}%
          &%
            \includegraphics{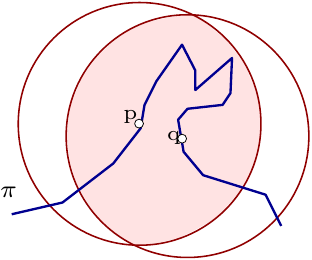}%
          &%
              \hspace*{0.1in}%
          &%
            \includegraphics[width=5cm]{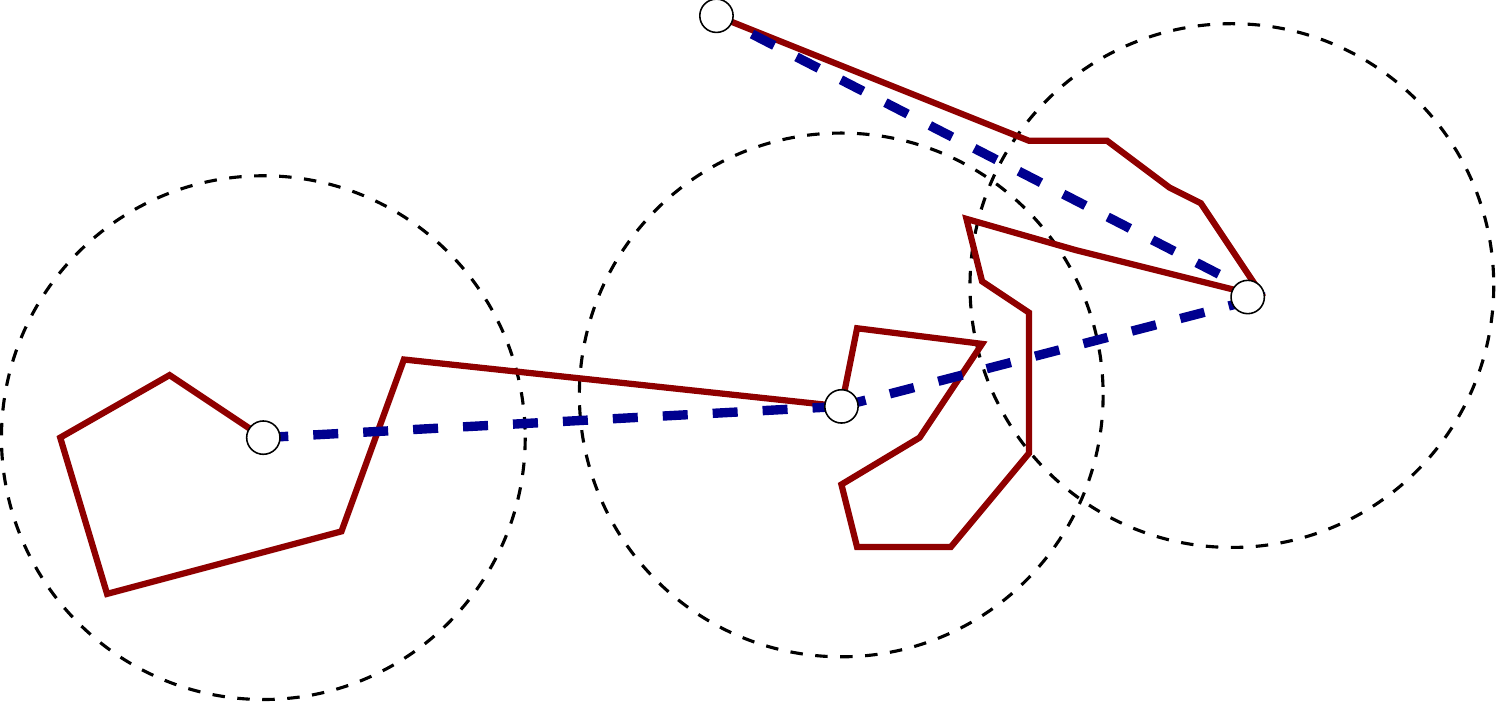} \\
          (a) & & (b) & & (c)
        \end{tabular}
    \end{center}
    \caption{(a) Light and dark curves are close under Hausdorff but
       far under \Frechet~distance.  (b) $\curveA$ is
       \emph{$\kappa$-bounded} iff for any $p, q \in \curveA$,
       subchain $\curveA(p,q)$ lies inside the shaded region, where
       the radius of two disks (centered at $p$ and $q$) is
       $\kappa\dist(p,q)/2$.  (b) The dashed curve $\mu$-simplifies
       the solid one; the radius of each disk is $\mu$.}
    \figlab{fig:kbounded} \vspace*{-0.2in}
\end{figure}
\paragraph{Our results. }
Given the apparent difficulty of improving the worst-case time
complexity of computing the \Frechet{} distance between two
unrestricted polygonal curves, we aim at developing algorithms for
more realistic cases. First, in \secref{approximation:algorithms}, we
consider efficient approximation algorithms for the slightly simpler
variant of \Frechet{} distance, the \emph{discrete \Frechet{}
   distance}, the best algorithms for which currently have only
slightly better worst-case time complexity than the continuous case.
Most currently algorithms for computing \Frechet distance rely on a
so-called \emph{decision} procedure which determines whether a given
distance is larger or smaller than the \Frechet distance between the
two given curves.  We observe that an approximation solution to the
decision problem can lead to an approximation of \Frechet{} distance,
and curve simplification can help us to approximate the decision
problem efficiently. We apply this idea for two families of common
curves. In the first case, given two polygonal curves of size $n$ and
$m$ respectively, with one of them being $\kappa$-bounded, we can
$\eps$-approximate their discrete \Frechet{} distance in
$O((m + n \kappa^d/\eps^d) \log (nm))$ time in $d$-dimensions.  In the
second case, both curves are so-called \emph{\backbone~curves}, used
widely to model molecular structures like protein backbones, and
DNA/\si{RNAs}. We $\eps$-approximate their (both \emph{discrete} and
\emph{continuous}) \Frechet{} distance in near linear time in two
dimensions, and in roughly $O(nm^{1/3}\log nm)$ time in three
dimensions.

In \secref{b:w:cells}, we shift our focus back to the exact
computation of discrete \Frechet{} distance. Previously, the problem
of deciding whether \Frechet{} distance was smaller than some
threshold was cast as finding some viable path in the so-called
\emph{free-space diagram} which is a $n \times m$ map. We observe that
such viable path can be computed once some subset $\bwcells$ of cells
in the free-space diagram are given. The size of $\bwcells$ is $nm$ in
worst case, but is expected to be much smaller in general. Based on
this observation, we present algorithms that run in
$O(|\bwcells| + n\log^{d-1} n)$ time for discrete \Frechet{} distance
in $L_\infty$ norm in $d$-dimensions.  For $L_2$ norm, it takes
roughly $O(|\bwcells| + n^{4/3}\polylog n)$ time in two dimensions,
and $O(|\bwcells| + n^{2-1/2^d}\polylog n)$ time for $d > 2$.

\section{Preliminaries} 
\seclab{preliminaries}

A \emph{(parameterized) curve} in $\Rspace^d$ can be represented as a
function $f \colon [0,1] \to \Rspace^d$.  A \emph{(monotone)
   reparametrization} $\alpha$ is a continuous non-decreasing function
$\alpha \colon [0,1] \to [0,1]$ with $\alpha(0)=0$ and $\alpha(1)=1$.
Given two curves $f,g : [0,1] \to \Rspace^d$, the \emph{\Frechet
   distance} between them, $\FDist(f, g)$, is defined as
\[
    \FDist(f,g) := \inf_{\alpha,\beta} \max_{t \in [0, 1]}
    \dist(f(\alpha(t)), g(\beta(t)) ).
\]
where $\dist(x, y)$ denotes the Euclidean distance between points $x$
and $y$, and $\alpha$ and $\beta$ range over all monotone
reparametrizations.

\paragraph{Discrete \Frechet{} Distance.}
A simpler variant of the \Frechet{} distance for two polygonal curves
$\curveA = \permut{ p_1, p_2, \ldots, p_n }$ and
$\curveB = \permut{ q_1, q_2, \ldots, q_m }$ is the \emph{discrete
   \Frechet{} distance}, denoted by $\DFDist(\curveA,
\curveB)$.
Imagine that both the dog and its handler can only stop at vertices of
$\curveA$ and $\curveB$, and at any step, each of them can either stay
at their current vertex or jump to the next one (i.e., magically, both
the dog and the handler seem to have turned into frog princess and
prince, respectively).  The discrete \Frechet{} distance is defined as
the minimal leash necessary at these discrete moments.

To formally define the discrete \Frechet{} distance, we first consider
a discrete analog of continuous reparametrizations.  A \emph{discrete
   monotone reparametrization} $\alpha$ from $\{ 1, \ldots, k \}$ to
$\{ 1, \ldots, \ell \}$ is a non-decreasing function
$\alpha : \{ 1, \ldots, k \} \to \{ 1, \ldots, \ell \}$, for integers
$k \geq \ell \geq 1$, with $\alpha(1)=1, \alpha(k)=\ell$ and
$\alpha(i+1)\leq\alpha(i)+1$, for all $i=1,\ldots,k-1$.  An
\emph{(order-preserving complete) \matching} between $\curveA$ and
$\curveB$ is a pair $(\alpha,\beta)$ of discrete monotone
reparametrizations from $\{1, \ldots, k \}$ to $\{1, \ldots, m \}$ and
$\{1, \ldots, n \}$.
The \emph{discrete \Frechet{} distance} between $\curveA$ and
$\curveB$, $\DFDist( \curveA, \curveB )$ is
\[
    \DFDist(f,g) := \min_{(\alpha,\beta)} \max_{t \in [1, k]}
    \dist(f(\alpha(t)), g(\beta(t)) ),
\]
where $(\alpha,\beta)$ range over all order-preserving complete
\matching{}s between $\curveA$ and $\curveB$. An equivalent definition
of order-preserving complete \matching~between $\curveA$ and $\curveB$
is a set of pairs
$M \subseteq \{ (p, q) \mid p \in \curveA, q \in \curveB \}$ such that
(i) \emph{order-preserving}: if $(p_i, q_j) \in M$, then no
$(p_s, q_t) \in M$ for $s < i$ and $t > j$; and (ii) \emph{complete:}
for any $p \in \curveA$ (resp. $q \in \curveB$), there exists some
pair involving $p$ (resp.  $q$) in $M$.  The discrete \Frechet{}
distance is related to the edit distance between the ``strings''
$\curveA$ and $\curveB$ where the cost of changing a symbol is the
Euclidean distance of the relevant points.

It is well known that discrete and continuous versions of the \Frechet
distance relate to each other as follows:
\[
    \FDist(\curveA, \curveB) \le \DFDist(\curveA, \curveB) \le
    \FDist(\curveA, \curveB) + \max \{\ell_1, \ell_2\},
\]
where $\ell_1$ and $\ell_2$ are the lengths of the longest edges in
$\curveA$ and $\curveB$, respectively.  This suggests using $\DFDist$
to approximate $\FDist$.
Unfortunately, it seems that computing $\DFDist(\curveA, \curveB)$ is
asymptotically almost as hard as computing $\FDist(\curveA, \curveB)$.

\paragraph{Decision problem.}
In the original paper, Alt and Godau \cite{ag-cfdbt-95} used the
following framework to compute $\FDist(\curveA, \curveB)$: First,
develop a procedure that answers the following \emph{decision problem}
in $\Theta(nm)$ time and space by a dynamic programming algorithm:
Given a parameter $\error \ge 0$, is
$\FDist(\curveA, \curveB) \le \error$?  This procedure is then used as
a subroutine to search for $\FDist(\curveA, \curveB)$ using parametric
search paradigm within $O(nm\log^2 nm)$ time\cite{ag-cfdbt-95,
   ast-apsgo-94}. The same paradigm can be used to compute
$\DFDist(\curveA, \curveB)$ in $O(nm\log (nm))$ time by replacing the
parametric search to a binary search. Although this is slightly worse
than the $\Theta(nm)$ algorithm in \cite{em-cdfd-94}, we describe how
to solve the decision problem for $\DFDist(\curveA, \curveB)$ below,
as our algorithm will use this framework, and as the algorithm from
\cite{em-cdfd-94} runs in $\Theta(nm)$ time for any input.
\begin{figure}[htb]
    \centering \vspace{0.1in}
    \begin{tabular}{cccc}
      \includegraphics{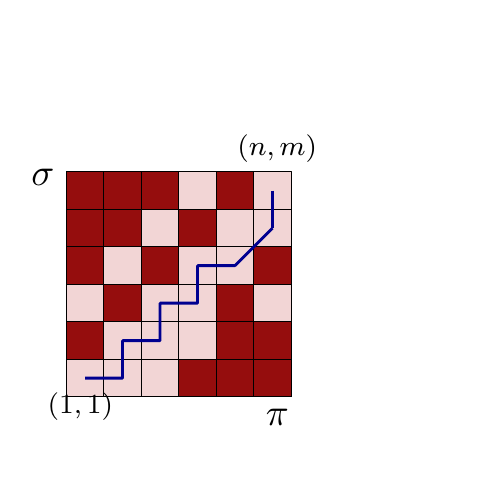}%
      &%
        \includegraphics{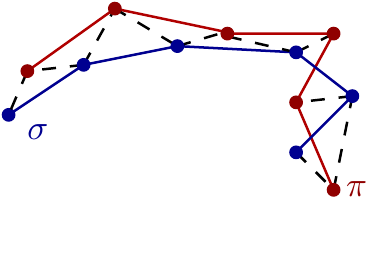}%
      & 
        \includegraphics{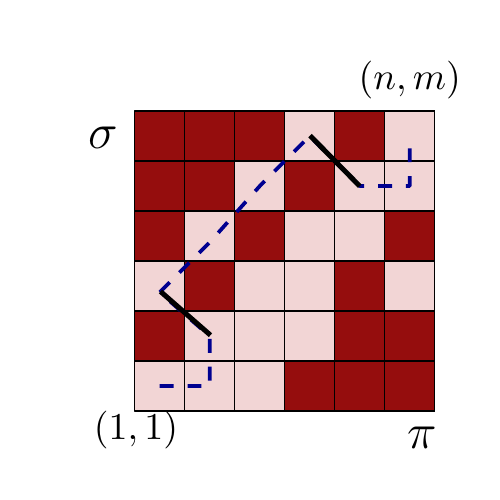}%
      &%
        \includegraphics{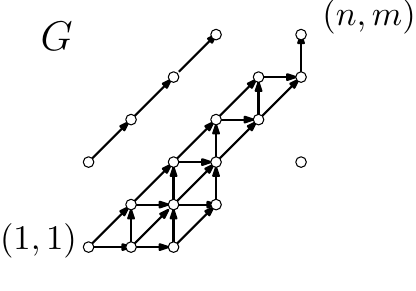} \\
      (a) & (b) & (c) & (d)\\
    \end{tabular}
    \caption{The valid path (solid curve) in the Free-space diagram
       $D(\curveA, \curveB, \error)$ in (a) corresponds to the
       order-preserving, complete \matching (dashed lines) in (b). The
       path in (c) is not valid, as the two solid segments violate the
       bi-monotonicity condition.  (d) The directed graph
       corresponding to the white cells in free-space diagram in (a).}
    \figlab{DPexplanation}
\end{figure}

Given two polygonal chains $\curveA$ and $\curveB$ and a distance
threshold $\error \ge 0$, we construct the following \emph{free-space
   diagram} $\Amap=\Amap( \curveA, \curveB, \error)$; $\Amap$ is an
$n \times m$ matrix (grid) and a grid cell $\Amap[i,j]$ has value $1$
if $\dist(p_i, q_j) \le \error$, and value $0$ otherwise.  We refer to
$1$-cells as \emph{white} and $0$-cells as \emph{black}.
The white cells in the $i$th column (resp., $j$th row) correspond to
the set of vertices of $\curveB$ (resp., $\curveA$) whose distance to
$p_i$ (resp., $q_j$) is less than $\error$.  A \emph{viable} path in
$\Amap$ is a path connecting $s := \Amap[1,1]$ to $t := \Amap[n,m]$,
visiting only white cells of $\Amap$, and moving in one step from
$(i,j)$ to either $(i,j+1)$, $(i+1,j)$, or $(i+1,j+1)$.
It is easy to check that a complete order-preserving \matching{} $M$
induces a viable path in $\Amap$ and vice versa (see
\figref{DPexplanation}). Hence
the problem of deciding ``$\DFDist(\curveA, \curveB) \leq \error$?''
is equivalent to deciding the existence of a viable path in $\Amap$.

Given $\Amap$, one can extract a viable path, if it exists, in
$\Theta(nm)$ time by a dynamic programming algorithm.  Alternatively,
one can traverse a directed graph $G$ defined as follows: The nodes of
$G$ are the white cells of $\Amap$.  A white cell is connected to its
top, right, or top-right neighbor cells by a directed edge, if they
are white. See \figref{DPexplanation} (d). The size of $G$ is the
bounded by $|W|$, the number of white cells of $\Amap$, the in-degree
(out-degree) of each node is at most three, and
$\DFDist(\curveA, \curveB) \le \error$ if and only if there is a
directed path in $G$ from $(1,1)$ to $(n,m)$.  That is, testing this
condition corresponds to a connectivity check in a directed graph in
time $O(|W|)$, once the graph $G$ is given.

\paragraph{Approximations.}
We say that $\tau$ is an $\eps$-approximation of
$\delta(\curveA, \curveB)$ if
\[
    (1 - \eps) \delta(\curveA, \curveB) \le \tau \le (1+\eps)
    \delta(\curveA, \curveB).
\]
We say that an algorithm \emph{$\eps$-approximates} the decision
problem ``Is $\delta(\curveA, \curveB) \le \delta$?'', if it returns
`yes' whenever $\delta(\curveA, \curveB) \le (1-\eps) \delta$ and `no'
whenever $\delta(\curveA, \curveB) \ge (1+\eps) \delta$.  If $\delta$
is a $(1+\eps)$-approximation of $\delta(\curveA,\curveB)$, the
algorithm is allowed to return either `yes' or `no.' Such an algorithm
is also called \emph{an $\eps$-fuzzy} decision procedure for
$\delta(\curveA, \curveB)$.

\section{Approximation Algorithms Based on Simplification}
\seclab{approximation:algorithms}

In this section, we first introduce a general framework for
approximating the discrete \Frechet{} distance by solving the decision
problem approximately. We then present efficient approximation
algorithms for two families of common curves based on this framework:
the $\kappa$-bounded curves and the \backbone curves, using curve
simplifications, packing arguments, and other observations.

\subsection{Approximation via approximate decision problem}
\seclab{apprdecision}

Given a set $P$ of $N$ points in $\Re^d$, compute a
\emph{well-separated pairs decomposition (WSPD)} of $P$ for a
separation parameter $10$, which is a collection $\{(A_i,B_i)\}$ of
pairs of subsets of $P$, with the property that (1) for every pair of
points $x,y \in P$, there is an index $i$, so that $x \in A_i$ and
$y \in B_i$ and (2) the minimum distance between $A_i$ and $B_i$ is at
least 10 times the diameter of either set. One can compute such a
collection of size $O(N)$ in $O(N \log N)$ time \cite{ck-dmpsa-95}.
For every pair $(A_i,B_i)$ in the WSPD, we choose an arbitrary pair of
points $p_i \in \A_i$ and $q_i \in B_i$ as its representatives. It is
easy to check that the distance between any two points $x,y \in P$ is
$1/5$-approximated by the distance between the representatives of the
corresponding WSPD pair.

If we want to approximately solve an optimization problem using a
decision procedure, where the optimal solution $\delta^*$ is one of
the distances induced by a pair of points of $P$, then we can use the
above WSPD to extract $O(N)$ values: for each WSPD pair, we take the
distance between its representative points.  Next, we replace each
value $x$ by two values $\frac45x$ and $\frac65x$, sort the resulting
values, and perform a binary search (using the decision procedure) to
identify which interval delimited by consecutive values contains
$\delta^*$. Let $\Interval = [x,y]$ be the resulting interval;
obviously $y \leq \frac65x$. We now perform another binary search on
this interval to identify the interval $[x',y']$ containing $\delta^*$
with $y' \leq (1+\eps)x'$, giving rise to an $\eps$-approximation of
$\delta^*$. The second binary search invokes the decision procedure
$O( \log(1 /\eps) )$ times.

Interestingly, the decision procedure does not have to be exact, and
it can return a \emph{fuzzy} answer, in the sense of last section. An
equivalent view of an $\eps$-fuzzy decision procedure is: for a
parameter $\delta$, if it returns ``no'', then
$\delta^* < (1+\eps)\delta$; otherwise if it returns ``yes'', then
$\delta^* > (1-\eps)\delta$. It can be shown that the above binary
search can be adapted to work with a fuzzy decision procedure with the
same performance guarantees (details omitted and can be found in
\apndref{A}. We summarize:
\begin{theorem}
    \thmlab{fuzzy}%
    Let $P$ be a set of $N$ points in $\Re^d$, and let $X$ be an
    optimization problem, for which the optimal answer is a distance
    induced by a pair of points of $P$.  Given an $\eps$-fuzzy
    decision procedure for $X$, one can $\eps$-approximate the optimal
    solution in
    \[
        O(N \log N + \TFuzzDecProc{N}{1/10} \log N +
        \TFuzzDecProc{N}{\eps/4} \log (1/\eps))
    \]
    time, where $\TFuzzDecProc{N}{\epsilon}$ is running time of the
    fuzzy decision procedure when the required accuracy is $\epsilon$.
\end{theorem}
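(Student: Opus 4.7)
The plan is to follow the outline sketched just before the statement, filling in the two places that require care: (i) reducing the search over the $\binom{N}{2}$ distinct distances to a search over $O(N)$ candidate values via a well-separated pair decomposition, and (ii) verifying that a two-stage binary search still converges when the decision oracle is \emph{fuzzy}.

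For step (i), I would compute a WSPD of $P$ with separation parameter $10$, obtaining $O(N)$ pairs $(A_i,B_i)$ in $O(N\log N)$ time, and for each pair record $x_i := \dist(p_i,q_i)$ for chosen representatives $p_i\in A_i,q_i\in B_i$. The standard WSPD estimate gives $\dist(p,q)\in[(4/5)x_i,(6/5)x_i]$ for every $p\in A_i,q\in B_i$. Since $\delta^{*}$ is realized by some pair of points of $P$, that pair sits inside some WSPD pair $i^{*}$, hence $\delta^{*}\in[(4/5)x_{i^{*}},(6/5)x_{i^{*}}]$. I then form the sorted list $L$ of the $2\cdot O(N)$ values $\{(4/5)x_i,(6/5)x_i\}_i$, so that some pair of consecutive entries of $L$ brackets $\delta^{*}$ with constant ratio.

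For step (ii), the first binary search walks through $L$ calling the fuzzy oracle at accuracy $1/10$. Unwinding the definition, a ``yes'' answer on a probe value $v$ implies $\delta^{*}<(1+1/10)v$ and ``no'' implies $\delta^{*}>(1-1/10)v$, which already suffices to decide which half of the surviving list to discard: if ``yes'', keep the lower half; if ``no'', keep the upper half. After $O(\log N)$ calls the search isolates an interval $[x,y]$ of constant ratio whose endpoints straddle $\delta^{*}$ up to the fuzziness, at a cost of $O(\TFuzzDecProc{N}{1/10}\log N)$. On this interval I then run a second binary search with accuracy $\eps/4$, halving $\log(y/x)$ at each step until the surviving interval has ratio at most $1+\eps$; this requires $O(\log(1/\eps))$ calls, costing $O(\TFuzzDecProc{N}{\eps/4}\log(1/\eps))$. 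Summing the three contributions yields the stated bound.

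The main obstacle is convincing oneself that the fuzziness is harmless, i.e.\ that a ``wrong-looking'' answer close to the boundary cannot cause the search to discard the half that actually contains $\delta^{*}$, and that the accumulated slack still produces a true $\eps$-approximation. The clean way to handle both issues is to maintain throughout the loop the invariant $(1-\rho)a \le \delta^{*} \le (1+\rho)b$ for the current endpoints $a,b$ and fuzziness $\rho$, and to verify that probing at the midpoint and updating $a$ or $b$ according to the oracle preserves it; choosing $\rho=\eps/4$ in the second stage together with the terminating ratio $b/a\le 1+\eps$ then gives the desired $(1\pm\eps)$ approximation after absorbing a small constant into the accuracy parameter. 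The arithmetic is routine (and, as the paper notes, is completed in \apndref{A}), but it is the only place where more than a direct application of the preceding sketch is needed.
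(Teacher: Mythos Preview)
Your proposal is correct and follows essentially the same two-stage scheme as the paper: a WSPD yielding $O(N)$ representative distances (scaled by $4/5$ and $6/5$), a first binary search at constant fuzziness $1/10$ to isolate a constant-ratio interval, and a second binary search at fuzziness $\eps/4$ to shrink the ratio to $1+\eps$. The invariant you propose, $(1-\rho)a\le\delta^{*}\le(1+\rho)b$, is equivalent to the one the paper maintains in \apndref{A} (that the oracle says ``no'' at $k_l$ and ``yes'' at $k_h$), and the final arithmetic verifying the $\eps$-approximation is the same routine computation.
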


\begin{proof}
    The algorithm is described above.  The fuzzy decision procedure
    can be used with constant accuracy in the stage of the
    algorithm. Higher accuracy of $\eps/4$ is required only at the
    second stage, when we perform the binary search over the interval
    $\Interval$.
\end{proof}

On the other hand, observe that there must exist some
$p^* \in \curveA$ and $q^* \in \curveB$ such that
$\dist(p^*, q^*) = \DFDist(\curveA, \curveB)$. In other words, the
solution $\delta^* = \DFDist(\curveA, \curveB)$ will be one of the
distances induced by a pair of points from $P = \{$ vertices from
$\curveA$ and $\curveB \}$. Hence the above theorem implies that we
now only need a fuzzy decision procedure for
$\DFDist(\curveA, \curveB)$ in order to approximate
$\DFDist(\curveA, \curveB)$.

\subsection{Approximation with simplifications}

The remaining question is how to implement fuzzy decision procedure
efficiently. One useful heuristic is curve simplification.  Below we
first describe the particular simplification we use and how it helps
in approximating $\DFDist(\curveA, \curveB)$. We then show that
together with a packing argument and other observations, guaranteed
efficiency can be achieved for the two classes of common curves that
we investigate.

\paragraph{Greedy simplification.}
Given a polygonal chain $\curveA = \permut{ p_1, \ldots, p_n }$, we
simplify $\curveA$ to obtain
$\Asimp = \permut{ \pnew_1, \ldots, \pnew_k}$, where vertices of
$\Asimp$ form a subsequence of $\curveA$, with $\pnew_1 = p_1$ and
$\pnew_k = p_n$.  More precisely, let $I_\curveA(i) = j$ if
$\pnew_i = p_j \in \curveA$; the subscript $\curveA$ is omitted when
it is clear from context.  We say that $\Asimp$
\emph{$\mu$-simplifies} $\curveA$ if (i) $I(i) < I(k)$ for $i < k$
(i.e, order-preserving), and (ii) $\dist(\pnew_i, p_k) \le \mu$ for
any $k \in [~I(i), I(i+1)~)$ (see \figref{fig:kbounded} (c)).  (This
definition of $\mu$-simplification is slightly different from the
standard definition found in the literature.)

We construct a $\mu$-simplification of $\curveA$, $\Asimp$, in a
greedy manner: Start with $\pnew_1 = p_1$. At some stage, suppose we
have already computed $\pnew_i = p_j$.  In order to find $I(i+1)$, we
check each vertex of $\curveA$ starting from $p_j$ in order, and stop
when we reach the first edge $p_kp_{k+1}$ of $\curveA$ such that
$\dist(p_j,p_k) \leq \mu$ and $\dist(p_j,p_{k+1}) > \mu$. We set
$\pnew_{i+1} = p_{k+1}$ and proceed until we reach $p_n$, at which
point we add $p_n$ as the last vertex of $\Asimp$. The entire
procedure takes linear time. By construction, the following
observation is straightforward.

\begin{observation}
    For any edge $\newp_i\newp_{i+1}$ in $\Asimp$, other than the last
    edge,
    we have $\dist(\newp_i, \newp_{i+1}) \ge \mu$.

    \obslab{simplength}
\end{observation}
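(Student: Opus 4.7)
The plan is to unwind the definition of the greedy simplification and observe that the desired length lower bound is precisely the condition baked into its stopping rule; the ``other than the last edge'' exception is the only point requiring care.

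In more detail, I would fix any edge $\newp_i \newp_{i+1}$ of $\Asimp$ that is not the last one. Since by construction $\newp_k = p_n$ is the only vertex appended to $\Asimp$ by the terminal rule of the greedy procedure, the assumption that $\newp_{i+1}$ is not the last vertex of $\Asimp$ forces $\newp_{i+1}$ to have been produced by the forward scan. That scan starts from $\newp_i = p_j$ and halts at the first edge $p_k p_{k+1}$ of $\curveA$ satisfying $\dist(p_j, p_k) \le \mu$ and $\dist(p_j, p_{k+1}) > \mu$, at which point it sets $\newp_{i+1} := p_{k+1}$. Thus
\[
   \dist(\newp_i, \newp_{i+1}) \;=\; \dist(p_j, p_{k+1}) \;>\; \mu,
\]
which yields the claimed inequality.

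The argument has essentially no obstacle; the only subtlety is justifying the ``last edge'' exclusion. It is genuinely necessary: if, during some iteration, the forward scan from $\newp_{k-1}$ reaches $p_n$ without ever exceeding distance $\mu$, then the terminal rule appends $p_n$ as $\newp_k$ even though $\dist(\newp_{k-1}, p_n) \le \mu$. This is precisely the configuration excluded in the statement, and it is the only one in which the $\mu$ lower bound can fail.
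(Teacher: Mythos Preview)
Your argument is correct and is exactly the reasoning the paper has in mind: the observation is stated without proof as ``straightforward by construction,'' and your unwinding of the greedy stopping rule is precisely that construction-based justification, including the correct handling of the last-edge exception.
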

\begin{figure}[htb]
    \begin{center}
        \includegraphics{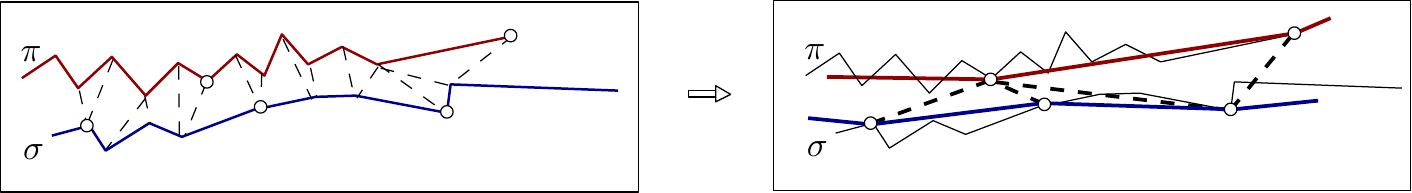}
        \caption{Small empty circles mark vertices of $\Asimp$ and
           $\Bsimp$. Left picture shows part of $\curveA$ and
           $\curveB$ and the correspondence $M^*$ (indicated by dashed
           segments). Right picture shows $\Asimp$ and $\Bsimp$ (thick
           curves) and the induced correspondence for them (thick
           dashed segments). }
        \figlab{fig:simpappr}
    \end{center}
    \vspace*{-0.2in}
\end{figure}
Now if we $\mu$-simplify both input curves $\curveA$ and $\curveB$ to
obtain $\Asimp$ and $\Bsimp$, we have the following lemma:
\begin{lemma} 
    \lemlab{simpappr}%
    $\DFDist(\curveA, \curveB) - 2\mu \le \DFDist(\Asimp, \Bsimp) \le
    \DFDist(\curveA, \curveB) + \mu$.
\end{lemma}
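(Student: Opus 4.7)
The plan is to prove the two inequalities separately, using the $\mu$-simplification bound $\dist(\hat{p}_i, p_a) \le \mu$ for $a \in [I(i), I(i+1))$ (and its analogue for $\Bsimp$), together with the key observation that each simplification vertex $\hat{p}_i = p_{I(i)}$ is \emph{itself} a vertex of $\curveA$ (and symmetrically for $\hat{q}_j$). This observation is what will let the upper bound come out to $+\mu$ rather than $+2\mu$.

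For the upper bound $\DFDist(\Asimp, \Bsimp) \le \DFDist(\curveA, \curveB) + \mu$, I start from an optimal correspondence $M^*$ realizing $\delta := \DFDist(\curveA, \curveB)$. For each $\hat{p}_i$, since $\hat{p}_i = p_{I(i)}$ is a genuine vertex of $\curveA$, completeness of $M^*$ produces some $(p_{I(i)}, q_b) \in M^*$. I let $b_i^{\pm}$ denote the smallest/largest such $b$, and let $j_i^{\pm}$ satisfy $b_i^{\pm} \in [J(j_i^{\pm}), J(j_i^{\pm}+1))$. A single triangle-inequality hop then gives
\[
\dist(\hat{p}_i, \hat{q}_{j_i^{\pm}}) \le \dist(p_{I(i)}, q_{b_i^{\pm}}) + \dist(q_{b_i^{\pm}}, \hat{q}_{j_i^{\pm}}) \le \delta + \mu,
\]
where only one $+\mu$ term appears precisely because $\hat{p}_i = p_{I(i)}$ makes the other triangle hop vanish. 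I perform the symmetric construction around every $\hat{q}_j$, producing pairs $(\hat{p}_{i_j^{\pm}}, \hat{q}_j)$, and take $\hat{M}$ to be the union of all these pairs.

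For the lower bound $\DFDist(\curveA, \curveB) \le \DFDist(\Asimp, \Bsimp) + 2\mu$, I take a viable path $\hat{\pi}$ in the $k \times l$ free-space diagram realizing $\hat{\delta} := \DFDist(\Asimp, \Bsimp)$, and partition the $n \times m$ grid into blocks $B_{i, j} := [I(i), I(i+1)-1] \times [J(j), J(j+1)-1]$ (with the conventions $I(k+1) = n+1$ and $J(l+1) = m+1$). For every $(a, b) \in B_{i,j}$ with $(i, j)$ on $\hat{\pi}$, applying the triangle inequality twice (we cannot save a hop here since $p_a, q_b$ are generic vertices rather than simplification vertices) yields $\dist(p_a, q_b) \le \mu + \hat{\delta} + \mu$, so every cell in the union of such blocks is white in the free-space diagram at threshold $\hat{\delta} + 2\mu$. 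I then inflate $\hat{\pi}$ into a viable path in the $n \times m$ diagram: enter each block at its bottom-left corner $(I(i), J(j))$, traverse to an exit corner chosen by the outgoing step of $\hat{\pi}$ (top-right for a diagonal, top edge for a row step, right edge for a column step), and transition via a single unit move to the next block.

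The main obstacle in both directions is the bookkeeping required to guarantee that the constructed correspondences are order-preserving and complete. For the upper bound this is handled by the choice of extremal indices $b_i^{\pm}, a_j^{\pm}$, which reduces order-preservation of $\hat{M}$ to that of $M^*$: for $i < i'$, the order-preservation of $M^*$ forces $b_i^+ \le b_{i'}^-$ and hence $j_i^+ \le j_{i'}^-$, which propagates to handle all cross-index comparisons (including the mixed cases where one pair comes from a $\hat{p}$-construction and the other from a $\hat{q}$-construction). For the lower bound it is handled by the block-inflation scheme: choosing each block's exit corner from the outgoing step of $\hat{\pi}$ guarantees that consecutive block traversals meet on a shared boundary and stitch into a single viable path from $(1,1)$ to $(n,m)$; completeness of the induced correspondence is then automatic, since any such viable path must cross every row and every column of the $n \times m$ grid by unit steps.
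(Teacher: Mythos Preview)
Your proof is correct and follows essentially the same approach as the paper's. For the upper bound you restrict to the extremal matches $b_i^{\pm}, a_j^{\pm}$ whereas the paper keeps all pairs satisfying its conditions (C1)/(C2), but the single-triangle-hop bound and the order-preservation argument (via sources in $M^*$) are identical; for the lower bound your ``block inflation'' is precisely the paper's Appendix~B construction rephrased geometrically---their three cases for consecutive matches are exactly your three exit-corner choices.
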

\begin{proof}
    We first consider the right-hand inequality.  Let
    $\error^* = \DFDist(\curveA, \curveB)$, and $M^*$ be the complete
    order-preserving \matching{} that produces
    $\DFDist(\curveA, \curveB)$.  Obviously, for any pair $(p_i, q_j)$
    in $M^*$, we have that $\dist(p_i, q_j) \le \error^*$. $M^*$ can
    be modified into a \matching{} $M$ between $\Asimp$ and $\Bsimp$
    as follows: we add the match $(\newp_i, \newq_j)$ to $M$ if and
    only if (C1) there exists a match
    $(p_{I_\curveA(i)}, q_b) \in M^*$ such that
    $b \in [I_\curveB(j), I_\curveB(j+1) )$, or (C2) there exist a
    match $(p_a, q_{I_\curveB(j)}) \in M^*$ such that
    $a \in [I_\curveA(i), I_\curveA(i+1) )$, where $I_\curveA$
    (resp. $I_\curveB$) maps the indices between $\curveA$ and
    $\Asimp$ (resp. $\curveB$ and $\Bsimp$) (see \figref{fig:simpappr}
    for an example).  It is easy to verify that $M$ is both complete
    and order-preserving. By the triangle inequality, we have that
    $\dist(\newp_i,\newq_j) \le \dist(\newp_i, q_b) + \dist(q_b,
    \newq_j)$
    for case (C1), implying that
    $\dist(\newp_i, \newq_j) \le \dist(\newp_i, q_b) + \mu \le
    \error^* + \mu$
    (the case for (C2) is symmetric).  Since $\DFDist(\Asimp, \Bsimp)$
    should be smaller than the distance induced by $M$, the right-hand
    inequality then follows.

    The proof for the left-hand inequality is similar but slightly
    more involved. Details omitted and can be found in the
    \apndref{B}.
\end{proof}

The above lemma implies that if the answer to
$\DFDist(\Asimp, \Bsimp) \le \error$ is `yes', then,
$\DFDist(\curveA, \curveB) \le \error + 2\mu$.  If it is `no', then
$\DFDist(\curveA, \curveB) \ge \error - \mu$.  Thus the decision
problem for $\DFDist(\Asimp, \Bsimp)$ ($2\mu/\error$)-approximates
that of $\DFDist(\curveA, \curveB)$. We next show that
$\DFDist(\Asimp, \Bsimp)$ can be answered asymptotically much faster
for two special classes of curves, giving rise to efficient fuzzy
decision procedure for them.

\subsection{\Frechet~Distance for $\kappa$-bounded curves}
\seclab{k-bounded}

Given a polygonal curve $\curveA$, let
$\curveA(x, y) \subseteq \curveA$ denote the subcurve of $\curveA$
that connects $x \in \curveA$ and $y \in \curveA$, and
$\length_{\curveA}(x,y)$ the length of $\curveA(x,y)$ along $\curveA$;
$\curveA$ may be omitted from the subscript when clear. We say that
$\curveA$ is \emph{$\kappa$-straight} if
$\length(x,y) \le \kappa \cdot \dist(x, y) $, for any
$x, y \in \curveA$.  Examples of $\kappa$-straight curves include
\emph{curves with increasing chords} of \cite{r-cic-94} and
\emph{self-approaching curves} of \cite{aaikl-gsac-01}.  As defined
by Alt \etal~\cite{akw-cdmpc-04}, $\curveA$ is \emph{$\kappa$-bounded}
if
$\curveA(x, y) \subseteq \ball(x, \frac{\kappa}{2} \dist(x,y) ) \cup
\ball(y, \frac{\kappa}{2} \dist(x, y))$,
for all $x,y \in \curveA$, where $\ball(x, r)$ is the radius-$r$
Euclidean ball centered at $x$ and where we have slightly abused the
notation by treating a curve section as a point set. See
\figref{fig:kbounded} (b) for an illustration in two dimensions.
Every $\kappa$-straight curve is $\kappa$-bounded.

We now describe how to construct an $\eps$-fuzzy decision procedure
for the problem ``$\DFDist(\curveA, \curveB) \le \error$?'', where one
curve, say $\curveB$, is $\kappa$-bounded.  We first $\mu$-simplify
$\curveA$ and $\curveB$ into $\Asimp$ and $\Bsimp$ respectively, using
$\mu := \eps \error / 2$. By \lemref{simpappr}, the decision problem
for $\DFDist(\Asimp, \Bsimp)$ is an $\eps$-fuzzy decision procedure
for $\DFDist(\curveA, \curveB)$. Hence we now focus on checking
whether $\DFDist(\Asimp, \Bsimp) \le \error$. Let $n$, $m$, $r$, $s$
be the size of $\curveA$, $\curveB$, $\Asimp$, and $\Bsimp$
respectively; $r = O(n)$ and $s = O(m)$.

\paragraph{Decision problem for $\DFDist(\Asimp, \Bsimp)$. }
Let $\Amapsimp$ be the free-space diagram for $\Asimp$ and $\Bsimp$
with respect to $\delta$. Recall that
$\DFDist(\Asimp, \Bsimp) \le \error$ if there exists a viable path in
$\Amapsimp$ which can be computed in $O(|W|)$ time once $W$, the set
of white cells of $\Amapsimp$ are given. We first bound the size of
$W$.

For every $\newp \in \Asimp$, let $N(\newp)$ be the set of points from
$\Bsimp$ contained in $\ball(\newp, \error)$. Obviously,
$|W| = \sum_{\newp \in \Asimp} | N(\newp) |$. Consider any two points
$q_1, q_2 \in \Bsimp$ that lie in $\ball(\newp, \error)$ for some
$\newp \in \Asimp$. There are two cases: (i) $q_1q_2$ is an edge of
$\Bsimp$ and (ii) otherwise.  For case (i), we have that
$\dist(q_1, q_2) \ge \mu$ by \obsref{simplength}. For case (ii), we
know that
\[
    \curveB(q_1, q_2) \subseteq \ball(q_1, \frac{\kappa}{2} \dist(q_1,
    q_2)) \cup \ball(q_2, \frac{\kappa}{2} \dist(q_1, q_2)),
\]
as $\curveB$ is $\kappa$-bounded. Furthermore, let
$q_1q \subset \Bsimp$ be the edge with $q \in \curveB(q_1, q_2)$;
$\dist(q_1, q) \ge \mu$ by \obsref{simplength}. It then follows that
$(\kappa/2)\dist(q_1, q_2) \ge \mu$ and therefore
$\dist(q_1, q_2) \ge 2\mu / \kappa$.  Hence
$N(\newp) = O((\kappa \error / \mu)^d)$ by a straightforward packing
argument. This means that the number of white cells is
$|W| = O(s (\kappa \error/\mu)^d) = O(n (\kappa\error/\mu)^d)$ given
that $\curveB$ is a $\kappa$-bounded curve.

We still need to compute $N(\pnew)$ efficiently, that is, to enumerate
the set of vertices of $\Bsimp$ contained in $\ball(\pnew, \error)$
for every $\pnew \in \Asimp$. This can be done by a spherical range
query. As there are no known efficient algorithms for spherical range
queries, we instead first perform a $\beta$-approximate range query of
$\ball(\pnew, \error)$ among all vertices from $\Bsimp$, such that
vertices lying completely inside $\ball(\pnew, \error)$ are guaranteed
to be retrieved, those completely outside
$\ball(\pnew, (1+\beta)\error)$ will not be reported, while those
in-between may or may not be returned. By the same packing argument as
above, it is easy to verify that the number of vertices returned is
still bounded by $O((\frac{(1+\beta)\kappa\error}{\mu})^d)$. We then
inspect each vertex returned, and only mark the corresponding cell in
$\Amapsimp$ white when it indeed lies in $\ball(\pnew, \error)$.

We preprocess $\Bsimp$ into a data structure of size $O(s) = O(m)$,
using $O(s)$ preprocessing time, such that the resulting
data-structure answers $\beta$-approximate range query for
$\ball(\pnew, \error)$ in $O(1/\beta^d)$ time. This can be easily
achieved by constructing a grid of appropriate size (which is
$\beta \error$), throwing the points of $\Bsimp$ into this grid (using
hashing). Next, an approximate spherical range query is no more than
probing all the grid cells that intersects the query ball. The number
of cells being probed in a single query is $O(1/\beta^d)$.  Therefore
the set of white cells in $\Amapsimp$ can be computed in
$O(r+s+ r (\kappa\error/\mu)^d )$ time by choosing $\beta > 0$ to be a
small constant, say $\beta = 1/2$.

Putting everything together, we have an $\eps$-fuzzy decision
procedure for $\DFDist(\curveA, \curveB)$ that runs in
$O(n+m + n \kappa^d/\eps^d)$ time and space in $\reals^d$. By
\thmref{fuzzy}, we have that:
\begin{lemma}
    An $\eps$-approximation of $\DFDist(\curveA, \curveB)$ for a
    polygonal curve $\curveA$ and a $\kappa$-bounded curve $\curveB$,
    of size $n$ and $m$ respectively, can be computed in
    $O((m + n \kappa^d/\eps^d) \log (n/\eps))$ time and
    $O(n+m+n \kappa^d / \eps^d)$ space in $d$~dimensions.
\end{lemma}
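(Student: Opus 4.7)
The plan is to invoke \thmref{fuzzy} directly, using the $\eps$-fuzzy decision procedure just constructed as the black box required by that theorem. First I would verify the applicability of \thmref{fuzzy}: let $P$ be the set of $n+m$ vertices of $\curveA$ and $\curveB$, and recall the observation made earlier that $\DFDist(\curveA,\curveB) = \dist(p^*,q^*)$ for some $p^* \in \curveA$ and $q^* \in \curveB$, so the optimum is indeed one of the $\binom{|P|}{2}$ pairwise distances on $P$. This is the hypothesis needed to apply the WSPD-based search machinery of \thmref{fuzzy}.

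Next I would identify $\TFuzzDecProc{N}{\eps}$ for this instance. From the preceding analysis, the fuzzy decision procedure for ``$\DFDist(\curveA,\curveB) \le \error$?'' runs in $O(n + m + n\kappa^d/\eps^d)$ time (the grid-based approximate range queries, plus the connectivity check on the white cells of $\Amapsimp$). So $\TFuzzDecProc{N}{\eps} = O(n + m + n\kappa^d/\eps^d)$ with $N = n+m$.

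Substituting into the bound of \thmref{fuzzy}, the total running time becomes
\[
    O\bigl((n+m)\log(n+m)\bigr) + O\bigl((n+m+n\kappa^d)\log(n+m)\bigr) + O\bigl((n+m+n\kappa^d/\eps^d)\log(1/\eps)\bigr).
\]
The first two terms are dominated by the last when $\eps < 1$, and collecting logarithms gives the claimed $O((m + n\kappa^d/\eps^d)\log(n/\eps))$; note that the $\log(n+m)$ and $\log(1/\eps)$ factors combine into a single $\log(n/\eps)$ factor since $m \le n\kappa^d/\eps^d$ can be absorbed into the leading term (and otherwise the $m\log m$ preprocessing is already accounted for). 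The space bound $O(n+m+n\kappa^d/\eps^d)$ is inherited directly from the space used by the fuzzy decision procedure, since neither the WSPD construction nor the binary searches require more.

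There is no real obstacle here beyond bookkeeping: the ingredients (fuzzy decision procedure, WSPD-based reduction, and the vertex-pair characterization of $\DFDist$) have all been established earlier in the section. The only care needed is to make sure the accuracy parameter passed to the fuzzy decision procedure in the two stages of \thmref{fuzzy} is the one that dictates the $\eps^{-d}$ blow-up, which happens only in the second (refinement) stage where $\eps/4$ accuracy is required; the first stage's constant accuracy contributes only an $O((n+m+n\kappa^d)\log(n+m))$ term, absorbed into the final bound.
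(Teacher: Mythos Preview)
Your proposal is correct and follows exactly the paper's approach: the paper's entire proof is the single phrase ``By \thmref{fuzzy}, we have that:'' after summarizing the fuzzy decision procedure, and you have simply spelled out the substitution into that theorem's bound together with the observation that $\DFDist$ is realized by a vertex pair. The only minor quibble is the sentence ``the first two terms are dominated by the last when $\eps<1$'': in fact the $\log(n+m)$ factor from the first stage is not dominated by $\log(1/\eps)$, but both are absorbed into the single $\log(n/\eps)$ factor in the stated bound, which is what you end up saying anyway.
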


\subsection{\Frechet{} Distance for Protein Backbones}
\seclab{backbone}

In molecular biology, it is common to model a protein backbone by a
polygonal chain, where each $C_{\alpha}$ atom becomes a vertex, and
each edge represents a covalent bond between two consequent amino
acids. All the bonds have approximately the same bond length, and no
two atoms (thus vertices) can get too close due to van der Waals
interactions.  This is the motivation behind the study of the
\emph{\backbone{}} curves, which have the following properties:
\begin{itemize}
    \item[P1.] For any two non-consecutive vertices $u$ and $v$ of the
    curve, $\dist(u, v) \ge 1$,

    \item[P2.] Every edge of the curve has length $l$ such that
    $c_1 \le l \le c_2$, where $c_1, c_2 > 0$ are constants.
\end{itemize}
We remark that although proteins lie in three dimensional space, there
are simplified models for protein backbones in both two and three
dimensions, such as the lattice model which has been widely studied to
understand the mechanism behind protein folding \cite{gip-aapss-99,
   ks-mcspf-94}.

Now suppose we are given \backbone curves $\curveA$ and $\curveB$ in
$\reals^d$.  Given a distance threshold $\error \ge 0$, we want to
know whether $\DFDist(\curveA, \curveB) \le \error$. We $\mu$-simplify
$\curveA$ and $\curveB$ to obtain $\Asimp$ and $\Bsimp$ as in the
previous case, for $\mu = \eps \error/ 2$, and construct the
free-space diagram $\Amapsimp$ for $\Asimp$ and $\Bsimp$ with respect
to $\error$.  $\Amapsimp$ is an $r \times s$~grid, where by
\obsref{simplength} and property \Ptwo, $r = |\Asimp| \le c_2 n / \mu$
and $s = |\Bsimp| \le c_2 m / \mu$.  Once $\Amapsimp$ is given, the
decision problem can be solved in time proportional to $|W|$, where
$W$ is the set of white cells in $\Amapsimp$.

\paragraph{The set of white cells $W$. }
A straightforward bound for $|W|$ is
$O(\min \{r \error^d, s \error^d\})$\footnote{In the following, the
   big $O$ notation sometimes hide factors depending on constants
   $c_1$ and $c_2$. }, as by the packing argument and property \Pone,
there are at most $O(\error^d)$ vertices lying in
$\error$-neighborhood of any vertex of $\Asimp$ and $\Bsimp$. If
$\delta < 1$, then the number of white cells is $O(n+m)$. Hence we now
assume that $\delta \ge 1$.

We can improve this bound by a more careful counting analysis. Assume
without loss of generality that $r \le s$.  For any vertex
$\pnew \in \Asimp$ and its $\error$-neighborhood
$\ball(\pnew, \error)$, let $E(\pnew)$ be the set of edges of $\Bsimp$
intersecting the ball $\ball(\pnew, \error)$.  The number of vertices
of $\Bsimp$ in $\ball(\pnew, \error)$ can be upper bounded by
$O(|E(\pnew)|)$.  Furthermore, given any edge
$e = (\qnew_i, \qnew_{i+1}) \in \Bsimp$, let
$\curveB(e) = \curveB(q_{I_q(i)}, q_{I_q(i+1)})$ (that is, subchain
$\curveB(e) \subseteq \curveB$ is simplified into edge $e$ in chain
$\Bsimp$).  $E(\pnew)$ can be partitioned into two sets: (i)
$E_1 = \{ e \in E(\pnew) \mid \curveB(e) \subseteq \ball(\pnew,
\error) \}$,
and (ii)
$E_2 = \{ e \in E(\pnew) \mid \text{at least a vertex of}~ \curveB(e)
~\text{lies outside~} \ball(\pnew, \error) \}$.
 
By property \Ptwo, we know that the number of vertices in $\curveB(e)$
is at least $\mu / c_2$ for any $e \in \Bsimp$.  Therefore
$|E_1| = O(c_2 \error^d / \mu)$. On the other hand, for every edge
$e \in E_2$, there is at least one vertex of $\curveB(e)$ that lies in
the spherical shell of
$\ball(\pnew, \error + c_2) \setminus \ball(\pnew, \error)$, as the
length of edges in $\curveB$ is at most $c_2$. Since the volume of
this spherical shell is $O(c_2(c_2 + \error)^{d-1})$, the size of
$E_2$ is bounded by $O((c_2 (c_2+\error)^{d-1}/(c_1^{d-1})))$.
Therefore, we have that
$|E(\pnew)| = |E_1| + |E_2| = O(\error^{d-1} + \error^d / \mu)$.
Summing it over all $r$ vertices of $\Asimp$, we have that
$|W| = O(\frac{n}{\mu} (\error^{d-1} + \error^d / \mu)) $.
Furthermore, since this number cannot exceed the size of $\Amapsimp$
which is $O(rs) = O(nm/\mu^2)$, we have
$|W| = \min \{ nm/\mu^2, O(\frac{n}{\mu} (\error^{d-1} + \error^d /
\mu) \}$.
Note that $|W|$ is maximized when the two balancing terms are equal:
$\frac{nm}{\eps^2\error^2} = \frac{\error^{d-2}}{\eps^2}$, that is,
when $\error = m^{1/d}$.  This implies that
$|W| = O(n m^{1-2/d} / \eps^2)$.

We still need to compute these white cells of $\Amapsimp$
efficiently. Similar to the case for $\kappa$-bounded curves, we
preprocess $\Bsimp$ into a data structure of size $O(s)$, using $O(s)$
preprocessing time, such that the resulting data structure answers
$\beta$-approximate range query for $\ball(\pnew, \error)$ in
$O(1/\beta^d)$ time, for a small constant $\beta > 0$, say
$\beta = 1/2$. We then check all vertices returned by this approximate
range query, and keep only those indeed contained in
$\ball(\pnew, \error)$.  Overall, we can compute all white cells in
$O(r+s+|W|)$ time and space, thus can answer the decision problem ``Is
$\DFDist(\Asimp, \Bsimp) \le \error ?$ '' in the same time and space.
Putting everything together, we have:
\begin{lemma}
    \lemlab{simpdecision}%
    Given two \backbone~curves of sizes $n$ and $m$, respectively, we
    can develop an $\eps$-fuzzy decision procedure for
    $\DFDist(\curveA, \curveB)$ w.r.t. $\delta$ that runs in
    $O((n+m) + \frac{1}{\eps^2}n m^{1 - 2/d})$ time and space. In
    particular, the time complexity is $O(n+m/\eps^2)$ when $d=2$, and
    $O(n+m + nm^{1/3}/\eps^2)$ when $d = 3$.    
\end{lemma}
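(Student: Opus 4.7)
The plan is to combine the three ingredients already developed in this section: the simplification reduction of \lemref{simpappr}, a careful packing bound on the number of white cells in the simplified free-space diagram, and a grid-based approximate range query for materializing those cells. First, I would set $\mu := \eps\error/2$ and construct the greedy $\mu$-simplifications $\Asimp$ and $\Bsimp$ in $O(n+m)$ time. By \lemref{simpappr}, any procedure that exactly answers ``$\DFDist(\Asimp,\Bsimp) \le \error$?'' automatically yields an $\eps$-fuzzy decision procedure for $\DFDist(\curveA,\curveB)$ at threshold $\error$, since $2\mu = \eps\error$ matches the slack in that lemma; the case $\error < 1$ is trivial by property \Pone, so I may assume $\error \ge 1$.

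Next, I would bound the grid sizes and the number of white cells as already sketched just above the lemma statement. By \obsref{simplength} consecutive vertices of $\Asimp$ lie at distance at least $\mu$, while property \Ptwo\ caps each edge of $\curveA$ at length $c_2$, so $r := |\Asimp| = O(n/\mu)$ and $s := |\Bsimp| = O(m/\mu)$; assume without loss of generality that $r \le s$. For each $\newp \in \Asimp$ I would partition the edges $E(\newp)$ of $\Bsimp$ meeting $\ball(\newp,\error)$ into $E_1$, whose underlying $\curveB$-subchain is contained in $\ball(\newp,\error)$, and $E_2$, whose subchain escapes that ball. Property \Ptwo\ forces each edge of $\Bsimp$ to account for $\Omega(\mu/c_2)$ vertices of $\curveB$, and property \Pone\ supplies the volume packing, so $|E_1| = O(\error^d/\mu)$. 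Each edge in $E_2$ contributes a vertex to the $c_2$-thick annulus around $\partial \ball(\newp,\error)$, and the shell volume combined with \Pone\ gives $|E_2| = O(\error^{d-1})$. Summing over $\Asimp$, clamping with the grid size $rs = O(nm/\mu^2)$, and balancing the two bounds at $\error = m^{1/d}$ yields $|W| = O(nm^{1-2/d}/\eps^2)$.

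To materialize $W$ efficiently, I would hash the vertices of $\Bsimp$ into a uniform grid of side length $\beta\error$ with $\beta = 1/2$, built in $O(s)$ time and space. Each approximate range query for $\ball(\newp,\error)$ probes the $O(1)$ cells meeting $\ball(\newp,(1+\beta)\error)$; the same packing argument guarantees that the returned candidate set is asymptotically no larger than $|N(\newp)|$, so exact filtering costs $O(1)$ per candidate. Thus $W$ is built in $O(n+m+|W|)$ time (using $r \le n$ and $s \le m$, since the simplifications are subsequences), after which the directed-graph reachability test of \secref{preliminaries} decides the existence of a viable path in $\Amapsimp$ in $O(|W|)$ additional time. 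Summing everything gives the claimed $O((n+m) + \frac{1}{\eps^2}nm^{1-2/d})$ time and space, which specializes to the two concrete bounds stated for $d=2$ and $d=3$.

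The main obstacle is the white-cell count in the second step. A naive packing inside $\ball(\newp,\error)$ would give $O(\error^d)$ vertices per query and hence only the inferior bound $O(rs)$ after summing; extracting the claimed $nm^{1-2/d}/\eps^2$ requires both the boundary-shell trick to control $|E_2|$ and the amortization of $|E_1|$ against the minimum length $\Omega(\mu/c_2)$ of the original $\curveB$-subchain per simplified edge, together with the final clamp by $rs$ before balancing. Once this count is in place, the remaining pieces---simplification, approximate range query, and directed-graph reachability---are essentially mechanical.
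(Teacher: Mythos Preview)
Your proposal is correct and follows the paper's own argument essentially step for step: the $\mu=\eps\error/2$ simplification via \lemref{simpappr}, the $E_1/E_2$ partition with the shell-volume bound on $|E_2|$ and the $\Omega(\mu/c_2)$-vertex amortization for $|E_1|$, the clamp by $rs$ and balance at $\error=m^{1/d}$, and the grid-hashing approximate range query followed by reachability in $O(|W|)$. There is no substantive difference between your outline and the paper's proof.
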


Finally, for \backbone~curves, in order to approximate
$\DFDist(\curveA, \curveB)$, one can use a binary search procedure
(described in \apndref{C} instead of the approach using WSPD as
described earlier. The advantage of the binary search procedure is
that all results can then be extended for the continuous case
$\FDist(\curveA, \curveB)$ by more careful and involved packing
arguments. We conclude with the following theorem.
\begin{theorem}
    \thmlab{backboneappr}%
    Given two \backbone curves $\curveA$ and $\curveB$ of $n$ and $m$
    vertices respectively, we can compute an $\eps$-approximation of
    $\FDist(\curveA, \curveB)$ in $O(\frac{(n+m)}{\eps^3}\log (nm))$
    time in two dimensions, and $O(\frac{1}{\eps^3}nm^{1/3}\log (nm))$
    time in three dimensions. 
\end{theorem}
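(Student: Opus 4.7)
The plan is to combine the fuzzy decision procedure of \lemref{simpdecision} with a binary search over candidate values of $\FDist(\curveA,\curveB)$, and to lift that procedure from discrete to continuous \Frechet{} distance at the cost of only an additional $O(1/\eps)$ factor in the per-call running time.

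For the lift to the continuous case, given a trial threshold $\error$, I would subdivide every edge of $\curveA$ and $\curveB$ by Steiner points so that no resulting edge exceeds length $\mu':=\eps\error/4$; call the subdivided curves $\curveA^\ast$ and $\curveB^\ast$. They trace the same point sets as $\curveA$ and $\curveB$, so $\FDist(\curveA,\curveB)=\FDist(\curveA^\ast,\curveB^\ast)$, and the standard discrete--continuous inequality from \secref{preliminaries} gives
\[
\FDist(\curveA,\curveB)\le\DFDist(\curveA^\ast,\curveB^\ast)\le\FDist(\curveA,\curveB)+\mu'.
\]
Hence an $(\eps/2)$-fuzzy decision for $\DFDist(\curveA^\ast,\curveB^\ast)$ at threshold $\error$ becomes, after absorbing the $\mu'$ slack, an $\eps$-fuzzy decision for $\FDist(\curveA,\curveB)$ at the same threshold. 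Property \Ptwo{} bounds the subdivision blow-up by $O(c_2/\mu'+1)=O(1/\eps)$ in the relevant regime $\error=\Omega(1)$; the trivial small-$\error$ regime would be handled separately using \Pone, which pins any valid \matching{} close to the identity.

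The main technical work is to re-audit the packing argument of \secref{backbone}: $\curveA^\ast$ and $\curveB^\ast$ are no longer \backbone{} curves, since Steiner vertices may crowd arbitrarily close, but I expect the bounds $|E_1|=O(\error^d/\mu)$ and $|E_2|=O((c_2+\error)^{d-1}/c_1^{d-1})$ to survive because the derivation really only counts \emph{original} arc length of $\curveB$ inside $\ball(\pnew,\error)$ and its surrounding shell of width $c_2$; the original vertices still obey \Pone, while the Steiner vertices merely refine arc length already accounted for. This should leave $|W|=O(nm^{1-2/d}/\eps^2)$ in terms of the original $n,m$, and combined with the $O(1/\eps)$ blow-up feeding the greedy simplification and the approximate range-search data structure on $\Bsimp$, will yield an $(\eps/2)$-fuzzy decision procedure for $\FDist$ running in $O((n+m)/\eps^3)$ time in two dimensions and $O(nm^{1/3}/\eps^3)$ time in three dimensions.

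To finish, I would plug this procedure into the binary-search framework deferred to \apndref{C}. A trivial upper bound $O(c_2(n+m))$ on $\FDist(\curveA,\curveB)$ and a $1/\mathrm{poly}(nm)$ lower bound on its nonzero scale together let an exponential-then-binary search locate the correct order of magnitude of $\FDist$ using $O(\log(nm))$ calls to the fuzzy decision procedure, and a further $O(\log(1/\eps))$ refinement calls sharpen the estimate to within a factor $1+\eps$. Multiplying the per-call cost by $O(\log(nm))$ gives the claimed bounds. The main obstacle is the packing step just described---bookkeeping the simplification analysis in terms of original arc length so that it applies to the non-\backbone{} curves $\curveA^\ast,\curveB^\ast$ while paying only the advertised $O(1/\eps)$ overhead; once that audit is complete, everything else is routine.
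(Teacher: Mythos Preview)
Your approach differs from the paper's in where the third factor of $1/\eps$ is spent. The paper combines \lemref{simpdecision} --- an $\eps$-fuzzy decision procedure costing $O((n+m)+nm^{1-2/d}/\eps^{2})$ per call --- with the multiplicative search of \apndref{C}, which steps by factors of $(1+\eps/3)$ and therefore makes $O((\log nm)/\eps)$ calls; the product is what yields the $1/\eps^{3}$ in the theorem. For the continuous case the paper only asserts that the packing argument can be redone ``more carefully'' and gives no details. Your route instead absorbs a $1/\eps$ factor into the decision procedure via Steiner subdivision and uses only $O(\log(nm)+\log(1/\eps))$ search calls. This is a perfectly legitimate alternative, and the Steiner trick is a cleaner and more explicit way to bridge the discrete and continuous cases than the paper's hand-wave.

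Two remarks. First, your per-call bound of $O((n+m)/\eps^{3})$ looks like an over-count: once the arc-length packing audit gives $|W|=O(nm^{1-2/d}/\eps^{2})$ in the \emph{original} $n,m$, the only extra cost from subdivision is the $O((n+m)/\eps)$ needed to build and greedily simplify $\curveA^\ast,\curveB^\ast$, so each call costs $O((n+m)/\eps + nm^{1-2/d}/\eps^{2})$. With $O(\log(nm))$ calls your scheme would then actually beat the stated bound by a factor of $\eps$; your over-estimate happens to land exactly on the theorem. Second, the ``$1/\mathrm{poly}(nm)$ lower bound'' on the nonzero scale of $\FDist$ is not justified for the continuous distance; you should instead invoke your small-$\error$ clause to terminate the search, exactly as \apndref{C} branches to \smallexactalg{} once $\error<1$.
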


\section{Pseudo--Output-Sensitive Algorithm}
\seclab{b:w:cells} In this section, given curves $\curveA$ and
$\curveB$ of size $n$ and $m$, respectively, we present a
pseudo-output-sensitive algorithm for computing
$\DFDist(\curveA, \curveB)$ for general curves. Although the worst
case complexity may still be $\Theta(nm)$, we believe that the
observation made within should help to produce efficient (possibly
approximate) algorithms for \Frechet~distance in practice. In what
follows, we provide results for $L_\infty$ norm (which provides a
constant factor approximation for optimal solution under $L_2$
norm). The time complexity for exact computation under $L_2$ norm is
quite messy and omitted.

Suppose we have an algorithm that answer the following
\emph{select-distance} query in $B(N)$ time: given a set of $N$ points
$P$ and a rank $k$, what is $\rank(k)$, the $k$\th smallest distance
among all pair-wise distances from $P$. Now given an algorithm to
solve the decision problem ``Is
$\delta^* = \DFDist(\curveA, \curveB) \le \delta$?'' in time $A(n+m)$,
we can find the optimal solution $\delta^*$ in
$O((A(n+m) + B(n+m)) \log (nm))$ time by querying $\rank(k)$ among
$n+m$ points in a binary search manner\footnote{In some sense, our
   previous approach using WSPD is performing implicit approximate
   distance selection. }. For $L_\infty$ norm, the distance-selection
problem can be solved in $O(dN\log^{d-1} N)$ in $\reals^d$
\cite{s-lisps-89}.
For the decision problem, a straightforward bound for time complexity
$A$ is $|W|$ plus the time to compute $W$, where $W$ is the set of
white cells in the free-space diagram
$\Amap = \Amap(\curveA, \curveB, \error)$ for a threshold
$\error > 0$. Below we provide a tighter bound for $A$ although its
worst-case complexity is still $\Theta(nm)$.

\paragraph{\Switchingcells. }
Given an $n \times m$~map $\Amap$ with respect to some threshold
$\error$, a \emph{\switchingcell}~ is a white cell whose immediate
neighbor above or below it is black. So if $\Amap[i,j]$ is a
\switchingcell, then the edge $q_jq_{j+1} \subset \curveB$ (or
$q_j q_{j-1}$) intersects the boundary of $\ball(p_i, \error)$ exactly
once (one endpoint must lie inside and one must be outside). For a
vertex $p\in \curveA$, while the set of white cells involving $p$
correspond those vertices from $q$ falling inside $\ball(p, \error)$,
the \switchingcells~involving $p$ correspond to those vertices inside
$\ball(p, \error)$ with one incident edge crossing the boundary of
$\ball(p, \error)$.  Let
$\bwcells = \bwcells(\curveA, \curveB, \error)$ denote the set of
\switchingcells{} of $\Amap(\curveA, \curveB, \error)$.  Although in
worst case $|\bwcells| = \Omega(|W|) = \Omega(nm)$, we expect it to be
much smaller than $|W|$ in practice. For example, consider the case
when vertices of $\curveB$ form lines of a cubic lattice of size
$n^{1/3} \times n^{1/3} \times n^{1/3}$ and $\error$ is roughly
$n^{1/3}/2$. For a vertex $p$ at the center of this cube, the number
of white cells in the corresponding column in $\Amap$ is $\Theta(n)$,
while the number of \switchingcells~is $\Theta(n^{2/3})$.  The
remaining questions are (i) how to compute the set of
\switchingcells~$\bwcells(\curveA, \curveB, \error)$ and (ii) how to
solve the decision problem once $\bwcells$ is given.

\paragraph{Decision problem with $\bwcells$. }
Once the set of \switchingcells{} is given, we can solve the decision
problem in $O(|\bwcells|)$ time and space as follows. Instead of
representing $\Amap$ explicitly, we now represent each column of
$\Amap$, $\column[i]$ for $1 \le i \le n$, as a set of ordered
intervals, where each interval corresponds to a maximal set of
consecutive white cells in this column. Obviously, the endpoints of
these intervals are exactly the \switchingcells. Let $\reach[i]$ be
the set of ordered intervals, each representing a maximal set of cells
in the $i$\th column reachable from $\Amap[1,1]$, and $|\column[i]|$
and $|\reach[i]|$ the number of intervals in $\column[i]$ and
$\reach[i]$, respectively. Easy to see that
$|\reach[i]| \le |\column[i]|$, because all cells covered by intervals
from $\reach[i]$ are white, and because if any cell $c$ from an
interval $I \in \column[i]$ is in some interval $J \in \column[i]$,
then all cells of $I$ above $c$ should also be covered by $J$.  Our
algorithm scans $\Amap$ from left to right (i.e, from column $1$ to
column $n$), and at the $i$\th round, we compute $\reach[i]$ by
merging $\column[i]$ and $\reach[i-1]$ in
$O(|\reach[i-1]| + |\column[i]|)$ time using a merge-sort like
procedure (see details in \apndref{D}).

\paragraph{Computing $\bwcells$. }
Given $p$ and $\error$, let $\bwcells(p, \delta)$ denote the set of
edges from $\curveB$ ``crossing'' the boundary of $\ball(p, \error)$.
Here by \emph{crossing}, we mean that one endpoint of the edge is
inside $\ball(p, \error)$ and one is outside (so it is not the usual
segment/ball intersection problem).  To compute $\bwcells$, we need to
perform $n$ \emph{edge/ball crossing queries}, one for each vertex
from $\curveA$.  Under the $L_\infty$ norm, the basic operation is in
fact an edge/cube crossing query, where all cubes are congruent. We
can preprocess the set of edges by building a range-search tree for
their endpoints (similar to the multi-level data structure for
orthogonal range reporting problem). The entire data structure has
size $O(m\log^{2d} m)$ and given a cube, the set of edges crossing it
can be reported in $O(\log^{2d} m + k)$ where $k$ is the number of
such edges.

Putting everything together, we conclude with the following theorem:
\begin{theorem}
    Given two arbitrary polygonal curves $\curveA$ and $\curveB$ in
    $\reals^d$, with $n$ and $m$ vertices, respectively, one can
    compute $\DFDist(\curveA, \curveB)$ under $L_\infty$-norm, in
    $O((\Phi + (n+m)\log^{2d} (nm) ) \log (nm) )$ time and
    $O(\Phi + (n+m)\log^{2d} (nm))$ space,
    where $\Phi$ is an upper bound of the number of
    \switchingcells~for any threshold $\error$.
\end{theorem}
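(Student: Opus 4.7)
The plan is to assemble three pieces already developed in the section: (i) the $O(|\bwcells|)$ column-merging decision procedure for ``Is $\DFDist(\curveA,\curveB)\le\error$?'' given the set $\bwcells(\curveA,\curveB,\error)$, (ii) a multi-level range-search structure on the edges of $\curveB$ that produces $\bwcells$ by answering $n$ edge/cube-crossing queries, and (iii) the generic binary-search reduction that, given a select-distance procedure running in time $B(N)$ and a decision procedure running in time $A(N)$ on $N=n+m$ points, computes the exact optimum in $O((A(N)+B(N))\log(nm))$ time. The output is obtained by plugging the cost of (i)+(ii) into $A$ and the cost of Shapira's $L_\infty$ distance-selection algorithm, $B(N)=O(dN\log^{d-1}N)$, into $B$.

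First I would implement one decision call. I preprocess the $2m$ endpoints of the edges of $\curveB$ into a $2d$-level orthogonal range-search tree, which uses $O(m\log^{2d} m)$ space and preprocessing. Under $L_\infty$, the ball $\ball(p_i,\error)$ is a cube, so the crossing predicate ``one endpoint of $q_jq_{j+1}$ inside the cube, the other outside'' is a disjunction of two conjunctions of $2d$ coordinate half-space conditions on $(q_j,q_{j+1})\in\Re^{2d}$, and is reported in $O(\log^{2d} m + k)$ time per query. Iterating over all $n$ vertices of $\curveA$ yields $\bwcells$ in $O(n\log^{2d} m + |\bwcells|)$ time. Feeding $\bwcells$ to the column-merging decision procedure then answers the decision problem in an extra $O(|\bwcells|)$ time, so one full decision call costs
\[
A(n+m) \;=\; O\!\bigl(|\bwcells| + (n+m)\log^{2d}(nm)\bigr)\;=\;O\!\bigl(\Phi+(n+m)\log^{2d}(nm)\bigr),
\]
and uses the same amount of space, since $|\bwcells|\le\Phi$ by hypothesis.

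Second I would apply the binary-search framework described immediately before the theorem. Because $\DFDist(\curveA,\curveB)$ is always realized by some pair of input vertices, $\delta^* = \rank(k^*)$ for some $k^*\in\{1,\dots,\binom{n+m}{2}\}$; a binary search on $k$ isolates $k^*$ after $O(\log(nm))$ iterations, each of which issues one distance-selection query of cost $B(n+m)=O((n+m)\log^{d-1}(n+m))$ and one decision call of cost $A(n+m)$. The range tree from step (ii) is built once and reused across all iterations, and the storage for $\bwcells$ is reclaimed between iterations, so the total space is the maximum of a single iteration, namely $O(\Phi+(n+m)\log^{2d}(nm))$. Multiplying by the $O(\log(nm))$ iterations gives total running time
\[
O\!\bigl((A(n+m)+B(n+m))\log(nm)\bigr) \;=\; O\!\bigl((\Phi+(n+m)\log^{2d}(nm))\log(nm)\bigr),
\]
which matches the stated bound.

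The main obstacle is step (ii): correctly encoding the edge/cube \emph{crossing} predicate as an orthogonal range problem in $\Re^{2d}$, so that the $O(\log^{2d} m)$ factor actually holds. The point is that ``crossing'' is strictly stronger than ``intersecting'': the query must exclude edges whose two endpoints both lie inside the cube as well as edges disjoint from the cube. Writing the crossing predicate as the disjoint union of the two orthants ``first endpoint inside, second outside'' and ``first endpoint outside, second inside'', each of which is a conjunction of $2d$ half-space constraints on $(q_j,q_{j+1})$, resolves this and is what forces the $2d$-level (rather than $d$-level) tree and the $\log^{2d}(nm)$ factor in both the time and the space bounds.
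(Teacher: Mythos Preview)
Your overall plan is exactly the paper's: decision via the column-merging procedure on $\bwcells$, produce $\bwcells$ by $n$ edge/cube-crossing queries answered with a multi-level orthogonal range tree on the $2d$-dimensional points $(q_j,q_{j+1})$, and wrap both the decision and Salowe's $L_\infty$ distance-selection (not ``Shapira'') in the $O(\log(nm))$-round binary search.

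There is, however, a real slip in your step (ii). You assert that ``first endpoint inside the cube, second outside'' is a \emph{conjunction} of $2d$ coordinate half-space constraints on $(q_j,q_{j+1})$. That is false: ``$q_j$ inside the cube'' is already a conjunction of $2d$ one-sided constraints by itself, while ``$q_{j+1}$ outside the cube'' is a \emph{disjunction} of $2d$ half-space constraints (the complement of a box is not a box). Consequently the region $\{(q_j,q_{j+1}) : q_j\in C,\ q_{j+1}\notin C\}$ is not a single $2d$-dimensional orthogonal range. The fix is standard and does not change the asymptotics: partition $\Re^d\setminus C$ into $2d$ pairwise-disjoint axis-aligned boxes (peel off one slab per facet), so that each of your two cases becomes a disjoint union of $2d$ genuine $2d$-dimensional box queries. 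Since $d$ is a constant, this yields $O(\log^{2d} m + k)$ per crossing query and the stated $O(m\log^{2d} m)$ space, exactly as claimed; but your justification for why $2d$ levels suffice needs this decomposition rather than the incorrect ``two conjunctions'' description.
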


We remark that for $L_2$ norm, the running time is
$\tilde{O}(\Phi + (n+m)^{4/3} \log (nm))$ for $d=2$ and
$\tilde{O}((\Phi + (n+m)^{2-1/2^d}) \log (nm) )$ time for $d > 2$. The
edge/ball crossing query required by computing $\bwcells$ can be
converted into an segment/hyperplane query in one dimension higher. It
is less practical as the solution involves heavy
machinery. Nevertheless, if approximation is allowed, one can use the
idea from \thmref{fuzzy} as well as multi-dimensional range trees to
obtain an $\eps$-approximation algorithm in $O(\Phi + \polylog~n)$
time and space where $\polylog$ depends on both $\eps$ and $d$.

\section{Conclusions and Discussion}
\seclab{discussion} In this paper, we considered the problem of
computing discrete \Frechet~distance between two polygonal curves
either approximately or exactly. Our main contribution is a simple
approximation framework that leads to efficient $\eps$-approximation
algorithms for two families of common curves: the $\kappa$-bounded
curves and the \backbone~curves. We also consider the exact algorithm
for general curves, and proposed a pseudo-output-sensitive algorithm
by observing that only a subset of the white cells from the free-space
diagram are necessary for the decision problem. It will be interesting
to investigate whether there are families of curves that are
guaranteed to have small $\Phi$, which is the upper bound on the
number of \switchingcells.

We feel that for general curves, it might be hard to develop
algorithms that are significantly sub-quadratic in worst case, given
that no such algorithm exists for a related and widely studied
problem, the edit distance for strings. Hence our future directions
will focus on practical variants of \Frechet~distance so that one can
handle outliers and/or partial matching, or so that one can perform
efficient multiple-curve alignments. Another important direction is to
develop efficient (approximation) algorithm for computing smallest
\Frechet~distance under rigid motions (in particular rotations).

\paragraph{Postscript.}

Since the appearance of this paper in ESA 2006 \cite{ahkww-fdcr-06} a
lot of research was done on related problems. Driemel \etal
\cite{dhw-afdrc-12} introduced the notion of $c$-packed curves, and
showed a near linear time algorithm for such curves.  Bringmann
\cite{b-wwdtt-14} proved that \Frechet distance can not be computed
exactly in subquadratic time under the SETH hypothesis. There is more
recent research on the \Frechet distance, but surveying it is outside
the scope of this note.

\newcommand{\etalchar}[1]{$^{#1}$}
 \providecommand{\CNFX}[1]{ {\em{\textrm{(#1)}}}}
  \providecommand{\tildegen}{{\protect\raisebox{-0.1cm}{\symbol{'176}\hspace{-0.03cm}}}}
  \providecommand{\SarielWWWPapersAddr}{http://sarielhp.org/p/}
  \providecommand{\SarielWWWPapers}{http://sarielhp.org/p/}
  \providecommand{\urlSarielPaper}[1]{\href{\SarielWWWPapersAddr/#1}{\SarielWWWPapers{}/#1}}
  \providecommand{\Badoiu}{B\u{a}doiu}
  \providecommand{\Barany}{B{\'a}r{\'a}ny}
  \providecommand{\Bronimman}{Br{\"o}nnimann}  \providecommand{\Erdos}{Erd{\H
  o}s}  \providecommand{\Gartner}{G{\"a}rtner}
  \providecommand{\Matousek}{Matou{\v s}ek}
  \providecommand{\Merigot}{M{\'{}e}rigot}
  \providecommand{\CNFSoCG}{\CNFX{SoCG}}
  \providecommand{\CNFCCCG}{\CNFX{CCCG}}
  \providecommand{\CNFFOCS}{\CNFX{FOCS}}
  \providecommand{\CNFSODA}{\CNFX{SODA}}
  \providecommand{\CNFSTOC}{\CNFX{STOC}}
  \providecommand{\CNFBROADNETS}{\CNFX{BROADNETS}}
  \providecommand{\CNFESA}{\CNFX{ESA}}
  \providecommand{\CNFFSTTCS}{\CNFX{FSTTCS}}
  \providecommand{\CNFIJCAI}{\CNFX{IJCAI}}
  \providecommand{\CNFINFOCOM}{\CNFX{INFOCOM}}
  \providecommand{\CNFIPCO}{\CNFX{IPCO}}
  \providecommand{\CNFISAAC}{\CNFX{ISAAC}}
  \providecommand{\CNFLICS}{\CNFX{LICS}}
  \providecommand{\CNFPODS}{\CNFX{PODS}}
  \providecommand{\CNFSWAT}{\CNFX{SWAT}}
  \providecommand{\CNFWADS}{\CNFX{WADS}}




\appendix
\section{Approximation via fuzzy decision procedure}
\apndlab{A}

Given an $\beta$-fuzzy decision procedure \algD($\delta$, $\beta$) for
deciding whether $\delta^* \le \delta$, we combine it with the WSPD
approach described in \secref{apprdecision} to compute an
$\eps$-approximation of $\delta^*$. In particular, construct the
$O(n)$ distances using WSPD as before, and perform a binary search by
querying \algD($v$, $1/10$) among these distances to identify the
interval $\Interval = [x,y]$ such that \algD($x, 1/10$) returns ``no''
while \algD($y,1/10$) returns ``yes''. Easy to verify that
$a = \frac{4}{5} x < \delta^* < \frac{7}{5} x = b$. We then start with
a pair $k_l=a / (1+\eps)$, and $k_h = b/(1-\eps)$, and perform a
standard binary search while always maintaining that
\algD($k_l, \eps/4$) returns ``no'' and \algD($k_h, \eps/4$) returns
``yes'' until $k_h - k_l \le (b-a) \eps / 3$. It is easy to verify
that the invariant holds when we start, and the number of iterations
is at most $O(\log (1/\eps) )$. Furthermore, because of the invariant
that we maintain, we have
$(1-\eps/4) k_l \le \delta^* \le (1+ \eps/4) k_h$ and
$k_h - k_l \le (b-a) \eps / 2 < \delta^* \eps / 2$. It then follows
that when $\eps < 1$,
$$\delta^* \le (1+\eps/4) k_h \le (1+\eps/4) (k_l + \delta^* \eps / 2) \rightarrow \delta^* \le (1+\eps/4) k_l / (1 - (1+\eps/4) \eps / 2) \le (1+\eps) k_l. $$
This implies that $k_l$ is an $\eps$-approximation of
$\delta^*$. Hence we can use a fuzzy decision procedure to approximate
$\delta^*$.

\section{Left-inequality of \lemref{simpappr} }
\apndlab{B}

Let $C^*$ be the complete order-preserving correspondence that produce
$\DFDist(\Asimp, \Bsimp)$. We now modify it into a correspondence $C$
between $\curveA$ and $\curveB$ as follows: First we add all matches
$(p_{I_\curveA(i)}, q_{I_\curveB(j)})$ to $C$ if
$(\pnew_i, \qnew_j) \in \EC^*$. Next, we take each pair of consecutive
matches. There are three cases as illustrated in
\figref{simpapprappendix}.
\begin{figure}[htb]
    \begin{center}
        \includegraphics{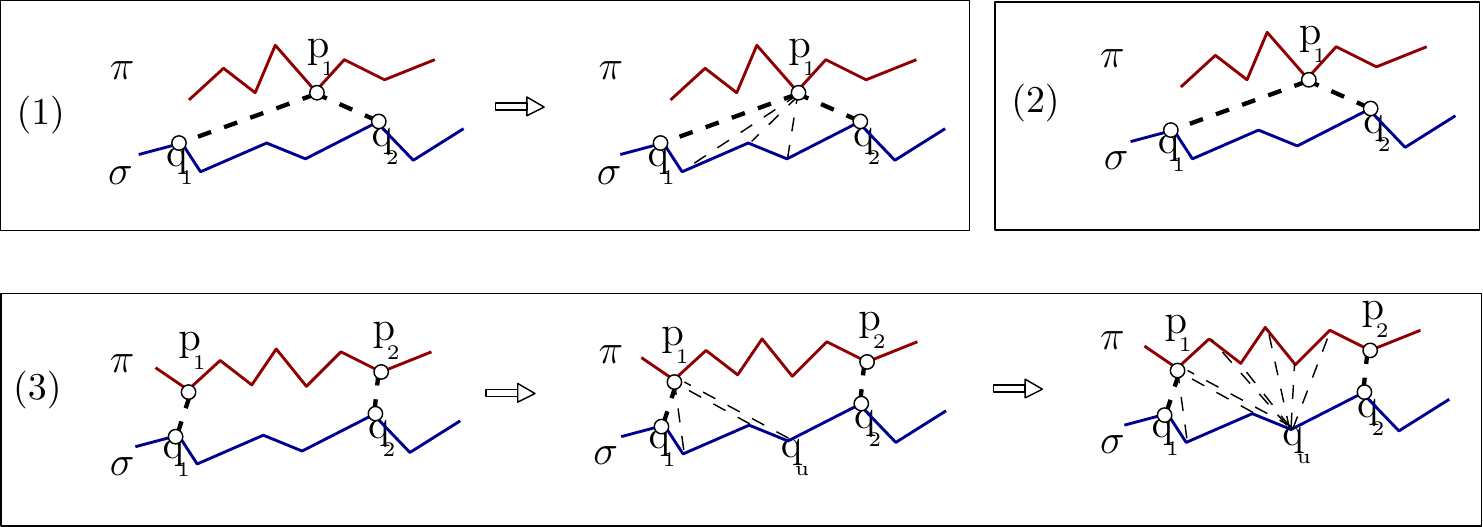}
        \caption{Small empty circles mark vertices of $\Asimp$ and
           $\Bsimp$. Three cases for two consecutive pairs from $C^*$
           between $\Asimp$ and $\Bsimp$. For case (3), we first add
           all correspondences between $p_1$ and all vertices between
           $q_1$ and $q_2$ along $\curveB$, we then add all
           correspondences between $q_u$ and vertices between $p_1$
           and $p_2$. }
        \figlab{simpapprappendix}
    \end{center}
\end{figure}
The first two are symmetric, and we simply add matches
$(p_{I_\curveA(i)}, q_k)$ (resp. $(p_k, q_{I_\curveB(j)})$) into $C$
for $k \in (I_\curveB(j), I_\curveB(j+1) )$ (resp.
$k \in (I_\curveA(i), I_\curveA(i+1) )$). For the third case, we add
all matches of the form $(p_{I_\curveA(i)}, q_k)$ for
$k \in (I_\curveB(j), I_\curveB(j+1) )$, and of the form $(p_k, q_u)$
for $k \in (I_\curveA(i), I_\curveA(i+1) )$ and
$u = I_\curveB(j+1) - 1$.  It is easy to verify that the resulting
matching $M$ is both complete and order-preserving. Furthermore, by
triangle inequality, each match $(p_i, q_j)$ added for the first two
cases satisfies $\dist(p_i, q_j) \le \DFDist(\Asimp, \Bsimp) + \mu$;
while an edge $(p_k, q_u)$ added in last case satisfies
\[
    \dist(p_k, q_u) \le \dist(p_k, p_{I_\curveA(i)}) +
    \dist(p_{I_\curveA(i)}, q_{I_\curveB(j)}) +
    \dist(q_{I_\curveB(j)}, q_u) \le \DFDist(\Asimp, \Bsimp) + 2\mu.
\]
This proves the left-hand inequality in \lemref{simpappr}.

\section{Approximating
   $\DFDist(\curveA, \curveB)$ for \backbone~curves}
\apndlab{C}

Let \algD($\curveA, \curveB, \error, \eps$) denote the $\eps$-fuzzy
decision procedure for $\DFDist(\curveA, \curveB)$ for two
\backbone~curves $\curveA$ and $\curveB$. In order to find an
$\eps$-approximation of $\DFDist(\curveA, \curveB)$, we can simply use
\thmref{fuzzy}. However, for this particular case, we can have a much
simpler binary search procedure within similar time/space complexity
that avoids the construction of WSPD.

In particular, if $\DFDist(\curveA, \curveB) < \beta$, for some
constant $\beta$, say $\beta = 1$, then we know that there exist a
pair of vertices $p^* \in \curveA$ and $q^* \in \curveB$ such that
$\dist(p^*, q^*) = \DFDist(\curveA, \curveB) < \beta$. We collect the
set $T$ of all pairs between $\curveA$ and $\curveB$ with distance
smaller than $\beta$. By similar packing argument as in
\secref{backbone} (in computing white cells), $|T| = O(n+m)$ and we
can compute $T$ in the same time/space. We then simply perform a
binary search among $T$ to locate $(p^*, q^*)$ and compute
$\DFDist(P, Q)$ exactly in $O((n+m)\log (nm))$ time. We call this
procedure \smallexactalg($P, Q, \eps, \beta$).

\begin{figure}[htb]
    \begin{center}
        \fbox{
           \begin{program}
               \>{\large{\sc{Algorithm}}}\ \ \
               \Proc{ApprFBackbone($P$, $Q$, $\eps$)} \\
               \Procbegin \\
               \>Set $\error_o = \error_n = 1, \yes = 0, \no = 0$. \\
               \> \While ( $\yes == 0$ or $\no == 0$ ) \Do \\
               \> \> \If $\error < 1 $ \Then \\
               \>    \>    \> return \smallexactalg($P, Q, \eps, 2$) \\
               \>    \> \Endif \\
               \>    \>  \If \algD($\error_n, \eps/3$) == `yes') \Then \\
               \> \> \> set $\yes = 1, \error_o = \error_n, \error_n = \error_o / (1 + \eps/3)$ \\
               \>   \> \Else \\
               \> \> \> set $\no = 1, \error_o = \error_n, \error_n = (1+\eps/3) \error_o$ \\
               \> \> \Endif \\
               \> \Endwhile \\
               \> return $\error_o$ \\
               \Endproc{}
           \end{program}
        }
    \end{center}
    \vspace{-0.5cm}
    \caption{Algorithm \apprFbackbone($P, Q, \eps$) computes an
       $\eps$-approximation of $\DFDist(P, Q)$ for two \backbone
       curves. Subroutine \smallexactalg($P, Q, \eps, \beta$) computes
       $\DFDist(P, Q)$ exactly if $\DFDist(P, Q) < \beta$. }
    \figlab{fig:apprFbackbone}
\end{figure}
For the case when $\DFDist(\curveA, \curveB) \ge \beta$, we perform a
different search procedure as described in
\figref{fig:apprFbackbone}. Easy to verify that \While loop can be
called at most $O(\log_{1+\eps} (n+m)) = O((\log(n+m))/\eps)$ time, as
obviously $\DFDist(\curveA, \curveB) \le c_2(n+m)$ for \backbone
curves.  To see that the output of the algorithm is indeed an
$\eps$-approximation of $\DFDist(\curveA, \curveB)$, observe that when
the algorithm terminates, the sequence of answers from \algD() is
either a sequence of ``yes'' followed by {\bf one} ``no'', or a
sequence of ``no'' followed by {\bf one} ``yes''. Let assume that we
have the first case (the second is symmetric). Suppose the output of
the algorithm is $\bar{\error}$, then we have that
$(1-\eps/3) \bar{\error} \le \error^* = \DFDist(\curveA, \curveB)$. In
the previous iteration of the \While~loop,
$\error_n = \bar{\error} (1+\eps/3)$, as the answer then was
``yes''. Hence we have that
$\error^* \le (1+\eps/3) \error_n = (1+\eps/3)^2 \bar{\error} \le
(1+\eps) \bar{\error}$
if $\eps < 1$. This implies that $\bar{\error}$ $\eps$-approximates
$\error^*$. Hence \thmref{backboneappr} follows.

\section{Decision problem with \switchingcells~$\bwcells$ }
\apndlab{D}

The only step unexplained is how to compute $\reach[i]$ by merging
$\column[i]$ and $\reach[i-1]$ in $O(|\reach[i-1]| + |\column[i]|)$
time.  This can be achieved by a bottom-up scanning for $\reach[i-1]$
and $\column[i]$ simultaneously. More specifically, given
$\reach[i-1]$ and $\column[i]$, we can sort the endpoints of their
intervals in $O(|\reach[i-1]| + |\column[i]|)$ time using merge
sort. We process them in order and maintain the partial $\reach[i]$ at
any time. For sake of simplicity, we call an endpoint a \emph{L-point}
(resp. \emph{H-point}) of $\reach[i-1]$ or $\column[i]$ if it is the
low-endpoint (resp. higher endpoint) of some interval from
$\reach[i-1]$ or $\column[i]$. The pseudo-code is shown in
\figref{fig:mergealg}, where $\reach$ = $\reach[i-1]$,
$\column = \column[i]$, and the output is $\newreach = \reach[i]$.  It
is easy to verify the correctness of the algorithm, and the running
time is proportional to the sum of interval lists being merged.

\begin{figure}[thb]
    \begin{center}
        \fbox{
           \begin{program}
               \>{\large{\sc{Algorithm}}}\ \ \
               \Proc{\mergeCol($\reach, \column$)} \\
               \Procbegin \\
               \> \> \> Set $\potentialReach = 0$ and $\wantReach = 0$ \\
               \> \> \> Sort $H$, the set of endpoints from $\reach$ and $\column$ \\
               \> \> \> \For ( $i = 1;~i < |H|;~i ++$)~~~~\Do \\
               \> \> \> \> \If ~~(~$H[i]$ is L-point of $\reach$~)~~ \Then \\
               \> \> \> \> \> Set $\potentialReach = 1$ \\
               \> \> \> \> \> \If ~~(~$\wantReach == 1$~) \\
               \> \> \> \> \> \Then~~~~ Add $H[i]$ as L-point for $\newreach$ \\
               \> \> \> \> \Else~\If ~~(~$H[i]$ is H-point of $\reach$~)~~ \Then \\
               \> \> \> \> \> Set $\potentialReach = 0$ \\
               \> \> \> \> \Else~\If ~~(~$H[i]$ is L-point of $\column$~)~~ \Then \\
               \> \> \> \> \> Set $\wantReach = 1$ \\
               \> \> \> \> \> \If ~~(~$\potentialReach = 1$~)  \\
               \> \> \> \> \> \Then~~~~Add $H[i]$ as L-point for $\newreach$ \\
               \> \> \> \> \Else~\If ~~(~$H[i]$ is H-point of $\column$~)~~ \Then \\
               \> \> \> \> \> Set $\wantReach = 0$ \\
               \> \> \> \> \> Add $H[i]$ as H-point for $\newreach$ \\
               \> \> \> \Endfor \\
               \Endproc{}
           \end{program}
        }
    \end{center}
    \vspace{-0.5cm}
    \caption{Algorithm to compute $\reach[i]$ (i.e, $\newreach$) from
       $\reach[i-1]$ (i.e, $\reach$) and $\column[i]$ (i.e,
       $\column$). }
    \vspace{-0.5cm} \figlab{fig:mergealg}
\end{figure}

\end{document}